\newcommand{\f}{\frac}
\newcommand{\del}{\partial}
\newcommand{\la}{\label}
\newcommand{\R}{\mathbb R}
\newcommand{\LR}{L^2(\R)}
\newcommand{\re}{\operatorname{Re}}
\newcommand{\imag}{\operatorname{Im}}
\newcommand{\eps}{\varepsilon}
\newcommand{\ran}{\operatorname{Ran}}
\DeclareMathOperator{\Ran}{{Ran}}
\theoremstyle{definition}
\newtheorem{de}{Definition}[section]
\theoremstyle{plain}
\newtheorem{prop}[de]{Proposition}
\newtheorem{lemma}[de]{Lemma}
\newtheorem{theorem}[de]{Theorem}
\newtheorem{cor}[de]{Corollary}
\numberwithin{equation}{section}
\author{Patrick Dondl\thanks{Abteilung f\"ur angewandte Mathematik\newline 
Albert-Ludwigs-Universit\"at Freiburg \newline
Hermann-Herder-Str. 10\newline
79104 Freiburg i. Br \newline
Germany \newline
Email: \{patrick.dondl, frank.roesler\}@math.uni-freiburg.de},
Patrick Dorey\thanks{Department of Mathematical Sciences\newline 
Durham University \newline
South Road \newline
Durham DH1 3LE \newline
United Kingdom\newline
Email: p.e.dorey@durham.ac.uk},
Frank R\"{o}sler\footnotemark[1]
}
\title{A Bound on the Pseudospectrum for a Class of Non-Normal Schr\"odinger Operators}
\begin{document}

\maketitle

\begin{abstract}
\noindent We are concerned with the non-normal Schr\"odinger operator
 $	
 H=-\Delta+V
 $
on $ L^2(\mathbb R^n)$, where $V\in W^{1,\infty}_{\text{loc}}(\R^n)$ and $\re V(x)\ge c|x|^2-d$ for some $c,d>0$. The spectrum of this operator is discrete and contained in the positive half plane. In general, the $\eps$-pseudospectrum of $H$ will have an unbounded component for any $\eps>0$ and thus will not approximate the spectrum in a global sense.

By exploiting the fact that the semigroup $e^{-tH}$ is immediately compact, we  show a complementary result, namely that  for every $\delta>0$, $R>0$ there exists an $\eps>0$ such that the $\eps$-pseudospectrum
\begin{equation*}
 	\sigma_\eps(H)\subset \{z:\re z \geq R\}\cup\bigcup_{\lambda\in\sigma(H)}\{z:|z-\lambda|<\delta \}.
\end{equation*}
In particular, the unbounded part of the pseudospectrum escapes towards $+\infty$ as $\eps$ decreases.

Additionally, we give two examples of non-selfadjoint Schr\"odinger operators outside of our class and study their pseudospectra in more detail. 
\end{abstract}
\begin{small}
\textbf{Mathematics Subject Classification (2010):} 35P99, 47B44.
\\
\textbf{Keywords:} Spectral theory, pseudospectrum, non-hermitian operators
\end{small}

\newpage

%\tableofcontents
%
%\newpage

\section{Introduction}\label{intro}

\subsection{Non-Selfadjoint Operators and Pseudospectra}

Let $\mathcal H$ be a Hilbert space and $A:\mathcal H\supset \mathcal{D}(A)\to\mathcal H$ be a closed operator. The \emph{spectrum} of $A$ is defined as
\begin{equation*}
 	\sigma(A) := \left\{ z\in\mathbb C\,|\, z-A \text{ is not bijective } \right\}.
\end{equation*}
If $A$ is a selfadjoint operator, the spectrum of $A$ contains a large amount of information about $A$, such as 
\begin{itemize}
\setlength{\itemsep}{0.5mm}
	\item[-]
	Does $A$ generate a one-parameter semigroup? 
	\item[-]
	Large $t$-behaviour of $\|e^{-tA}\|$,
	\item[-]
	Norm of the resolvent $\|(z-A)^{-1}\|$ for arbitrary $z\in\rho(A)$,
	\item[-]
	Location of $\sigma(A+V)$ if $V$ is a bounded perturbation.
\end{itemize}
In addition, if $A$ has compact resolvent, the eigenvectors of $A$ form a basis. 

For \emph{non-selfadjoint} operators, however, \emph{none} of the above properties can, in general, be deduced from the spectrum. This demonstrates that for non-selfadjoint operators the spectrum contains very little information about $A$. The following example provides an informative illustration of this fact. For $c\in\R$ consider the non-normal differential operator 
\begin{equation}\label{mainop}
 	H_{c}=-\frac{\mathrm{d}^2}{\mathrm{d}x^2}+ix^3+cx^2
\end{equation}	
on its maximal domain $\operatorname{dom}(H_{c})=\{\phi\in L^2(\mathbb
R):H_{c}\phi \in L^2(\mathbb R)\}$. The spectrum of $H_{c}$ is shown
in Figure \ref{specfig}.

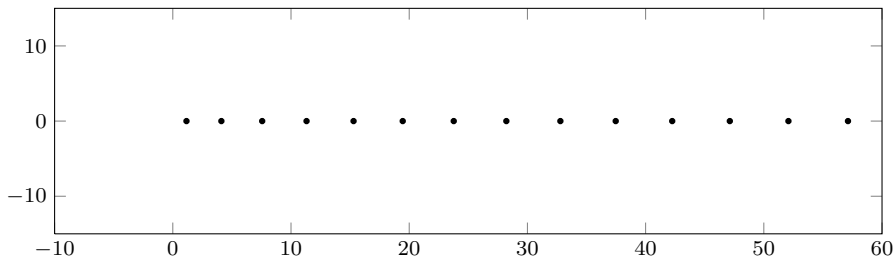
\begin{figure}[htbp]
\scriptsize
\begin{center}
\begin{tikzpicture}
\begin{axis}[%
width=11cm,
height=3cm,
at={(0,0)},
scale only axis,
point meta min=-10,
point meta max=2,
unbounded coords=jump,
xmin=-10,
xmax=60,
ymin=-15,
ymax=15,
axis background/.style={fill=white}
]
\addplot [color=black,line width=1.0pt,mark size=0.7pt,only marks,mark=*,mark options={solid},forget plot]
  table[row sep=crcr]{%
1.15626707198445	0\\
4.1092287528238	0\\
7.56227385497055	0\\
11.3144218198317	0\\
15.2915537535813	0\\
19.451529107096	0\\
23.7667405659721	0\\
32.7890832204992	0\\
28.2175246126817	0\\
57.1200123315513	0\\
62.2373782346685	0\\
37.4698260934543	0\\
42.2503888400521	0\\
52.0811584253371	0\\
47.123204144387	0\\
};
\end{axis}
\end{tikzpicture}%
\end{center}
\caption{The spectrum of $H_{c}$ for $c=1$, obtained in MATLAB using the EigTool package and a modified code from \cite{Trefethen200102,TE}.}
\label{specfig}
\end{figure}

It was shown in \cite{DDT} that the spectrum of $H_{c}$ is real and positive. Moreover, $H_{c}$ is closed and has compact resolvent \cite{CGM,Mezincescu2001} so the spectrum is also discrete. On the other hand, Novak recently obtained the following result
\begin{theorem}[\cite{NV}]\label{novak}
	The operator $H_{c}$ has the following properties:
		\begin{enumerate}[(i)]
		\item
		The eigenfunctions of $H_{c}$ do not form a (Schauder) basis in $\LR$. 
		\item
		$-iH_{c}$ does not generate a bounded semigroup. 
		\item
		$H_{c}$ is not similar to a self-adjoint operator via bounded and boundedly invertible transformations.
		\end{enumerate}
 \end{theorem}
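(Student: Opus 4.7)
My plan is to reduce all three assertions of Theorem~\ref{novak} to a single pseudospectral statement: there exist complex numbers $z_h$ with $\imag z_h \ge c_0 h^{-6/5}$ and unit vectors $\psi_h \in \operatorname{dom}(H_c)$ such that $\|(z_h - H_c)\psi_h\| = O(h^{N})$ for every $N\in\mathbb{N}$, as a semiclassical parameter $h\to 0$. Equivalently, the pseudospectrum of $H_c$ reaches arbitrarily far into the upper half-plane and grows faster than any polynomial rate as $\eps\to 0$.

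Granted such quasimodes, each part of the theorem follows by a standard argument that is flatly contradicted by their existence. For (ii), Hille--Yosida forces $\|(z-H_c)^{-1}\| \le M/\imag z$ on $\{\imag z>0\}$ whenever $-iH_c$ generates a bounded $C_0$-semigroup. For (iii), a bounded similarity $SH_cS^{-1}=A$ with $A=A^*$ and $\sigma(A)=\sigma(H_c)\subset\R$ forces $\|(z-H_c)^{-1}\| \le \|S\|\|S^{-1}\|/|\imag z|$. For (i), a Schauder basis of eigenvectors forces the one-dimensional spectral projections $\phi_n\otimes\phi_n^*$ to be uniformly bounded in norm, which in turn (via Cauchy integrals along contours enclosing finitely many eigenvalues) yields a resolvent bound polynomial in $\operatorname{dist}(z,\sigma(H_c))^{-1}$. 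In each case the superpolynomial lower bound on $\|(z_h - H_c)^{-1}\|$ furnished by the quasimodes contradicts the would-be resolvent bound.

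To construct the quasimodes I would follow the semiclassical WKB / Gaussian-beam strategy pioneered by Davies and systematized in the principal-type pseudospectral theory of Dencker--Sj\"ostrand--Zworski. The dilation $x=h^{-2/5}y$ recasts $H_c$ as $h^{-6/5}(h^2 D_y^2+iy^3+ch^{2/5}y^2)$, a semiclassical Schr\"odinger operator with principal symbol $p(y,\eta)=\eta^2+iy^3$. Pick $y_0,\eta_0>0$ and set $z_0=\eta_0^2+iy_0^3$, so that $\imag z_0>0$ and the Poisson bracket $\{\re p,\imag p\}(y_0,\eta_0)=6\eta_0 y_0^2$ has a definite positive sign, placing $p-z_0$ into the principal-type framework with the correct bracket condition for quasimodes in the upper half-plane. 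A Gaussian-beam ansatz $\psi_h(y)=h^{-1/4}a(y)e^{i\varphi(y)/h}$ with $\imag\varphi''(y_0)>0$, whose phase and amplitude are determined order by order from the eikonal and transport equations, then yields the required family after multiplication by a smooth cutoff and rescaling back to the $x$ variable. The chief obstacle is the rigorous error estimate for the cubic imaginary potential: one must carry the WKB expansion to sufficiently high order, verify that the subprincipal $cx^2$ term enters only through the transport equations and does not spoil the Gaussian localization, and above all ensure that the resulting $\psi_h$ is a genuine $L^2(\R)$ quasimode rather than a merely local approximate null vector. Once this semiclassical analysis is in place, all three parts of Theorem~\ref{novak} follow simultaneously from the soft dichotomies described above.
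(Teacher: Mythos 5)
The paper does not supply a proof of Theorem~\ref{novak}; it is quoted verbatim from~\cite{NV}, and the surrounding text makes clear that \cite{NV} obtains it from a semiclassical lower bound on the resolvent of $H_c$ deep in the upper (or lower) half-plane. Your proposal reconstructs precisely that strategy: rescale $x=h^{-2/5}y$ to put $H_c$ in semiclassical form $h^{-6/5}(h^2D_y^2+iy^3+ch^{2/5}y^2)$, build Gaussian-beam quasimodes at a point $(y_0,\eta_0)$ with $\{\re p,\imag p\}>0$, and conclude that $\|(z_h-H_c)^{-1}\|$ grows faster than any power of $h^{-1}$ at points with $\imag z_h\gtrsim h^{-6/5}$. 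The deductions of (ii) via Hille--Yosida and of (iii) via $\|(z-H_c)^{-1}\|\le\|S\|\|S^{-1}\|\,\operatorname{dist}(z,\sigma(H_c))^{-1}$ are correct and standard, and both contradict the superpolynomial blow-up; so, in outline, you have hit the same route as~\cite{NV}.

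The one step you should repair is the reduction for (i). A Schauder basis of eigenvectors does force the partial-sum projections $P_N$ and hence the one-dimensional $Q_n$ to be uniformly bounded, but the passage from that to a resolvent estimate is not a Cauchy-integral argument: the contour integral $\tfrac{1}{2\pi i}\oint(w-H_c)^{-1}\,dw$ goes from resolvent bounds to projection bounds, not the other way. What one actually uses is summation by parts on the expansion $(z-H_c)^{-1}=\sum_n(z-\lambda_n)^{-1}Q_n$: writing $a_n=(z-\lambda_n)^{-1}$ and $B_n=P_n f$, Abel summation together with the fact that the $\lambda_n$ are real and increase to $+\infty$ gives
\begin{equation*}
\sum_{n\ge 1}\left|\frac{1}{z-\lambda_n}-\frac{1}{z-\lambda_{n+1}}\right|\le\int_{\R}\frac{d\lambda}{|z-\lambda|^2}=\frac{\pi}{|\imag z|},
\end{equation*}
so $\|(z-H_c)^{-1}\|\le \pi\,\sup_N\|P_N\|\,/\,|\imag z|$. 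This is the bound that a Schauder basis (with real spectrum) actually furnishes, and it is again contradicted by your quasimodes. With that substitution for the vague ``Cauchy integrals'' clause, the argument is sound; the remaining work, as you correctly flag, is the quantitative semiclassical error estimate for the imaginary cubic symbol, which is exactly the technical content of~\cite{NV}.
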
 
This theorem makes it clear that $H_{c}$ is very different from a selfadjoint operator even though its spectrum looks well-behaved. 

The  above considerations motivate the definition of a \emph{finer} indicator than the spectrum for non-selfadjoint operators.
\begin{de}
	For any closed operator $A$ and $\eps>0$ the set
	\begin{equation*}
		\sigma_\eps(A) := \sigma(A)\cup\left\{ z\in\rho(A) : \|(z-A)^{-1}\|>\tfrac 1 \eps \right\}
	\end{equation*}
	is called the $\eps$-\emph{pseudospectrum} of $A$.
\end{de}
In contrast to the spectrum, the pseudospectrum \emph{does} contain significant information about $A$, such as \cite{TE,SK}
\begin{itemize}[-]
\setlength{\itemsep}{0.5mm}
	\item
	whether the eigenvectors of $A$ form a Riesz basis,
	\item
	whether $A$ generates a one-parameter semigroup,
	\item
	the large-$t$ behaviour of $e^{-tA}$.
\end{itemize}
In addition to the above, one has the following characterization of the pseudospectrum
\begin{equation*}
 	\sigma_\eps(A) = \bigcup_{\|V\|<\eps} \sigma(A+V)	
\end{equation*}
showing that the pseudospectrum contains information about the stability of the spectrum of $A$ under bounded perturbations with small norm.
For $c=0$ it was shown by Krej\v{c}i\v{r}\'{i}k and Siegl \cite{SK2} that the pseudospectrum of $H_c$ always contains an unbounded component. More precisely, they showed that for every $\delta>0$ there exist constants $C_1,C_2>0$ such that for all $\eps>0$
\hypersetup{citecolor=black}
\begin{equation}\label{pessim}
	\sigma_\eps( H_{0} )\supset\left\{ z\in\mathbb C :  |z|\geq C_1, \,|\arg z|<\left(\f\pi 2-\delta\right),\,|z|\geq C_2\left(\log\f 1 \eps\right)^{6/5} \right\}.
\end{equation}
This shows that the large eigenvalues of $H_0$ are highly unstable under small perturbations. A similar result for $c=1$ was shown by Nov\'{a}k in \cite{NV} and is easily extended to arbitrary $c>0$. Figure \ref{numplot} shows a numerical computation of the pseudospectrum of $H_1$. 

\newlength\figurewidth
\newlength\figureheight
\setlength\figurewidth{11cm}
\setlength\figureheight{7cm}
\begin{figure}[htbp]
	\begin{center}
		\includegraphics[scale=0.7]{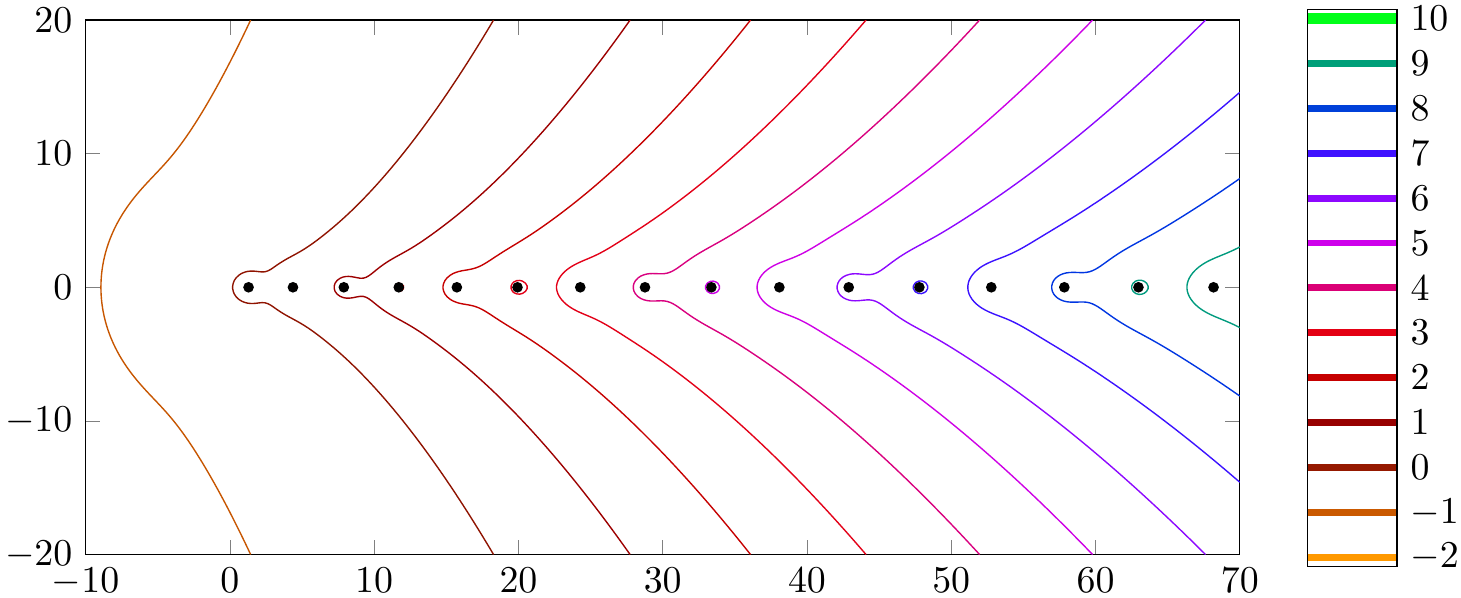}
	\end{center}
	\caption{Numerical plot of the lines of constant resolvent norm of $H_1$ also obtained using the EigTool package and a modified code from \cite{Trefethen200102,TE}. The colour bar shows the values of $\log_{10}(\|(\lambda-H_1)^{-1}\|)$.}\la{numplot}
\end{figure}

Equation \eqref{pessim} and Figure \ref{numplot} make it clear that for every fixed $\eps$ the pseudospectrum of $H_c$ contains a whole sector in the complex plane for $c>0$. Moreover, the opening angle of the sector can be chosen arbitrarily close to $\pi$ provided that a ball of sufficiently large radius around 0 is removed. In particular, large eigenvalues are very unstable under small perturbations.

On the other hand, Figure \ref{numplot} suggests that the unbounded component of the pseudospectrum escapes towards $+\infty$ as $\eps\to 0$. All of this suggests that the \emph{lower} eigenvalues of $H_c$ should indeed be stable (for $c>0$) under small perturbations of $H_c$, despite the above results.

It should be noted that the operator $H_c$ was first considered in the works of Bender et al. (see e.g. \cite{Bender1998,Bender:1998gh,Bender:2007nj}) who studied it in the context of Quantum Mechanics.
\\

In this paper we will study a class of non-normal Schr\"odinger operators containing the operators $H_c,\; (c>0)$. More precisely, we will prove a bound on the pseudospectrum of the operator $H=-\Delta+V$, where $\re V(x)\geq c|x|^2-d$  for some $c,d>0$ on $L^2(\R^n)$, which complements the results of \cite{SK2,NV}. 

The next section will contain a precise definition of the operator of interest and state our main results. Section \ref{examples} contains two illustrative special cases outside of our class, in which more precise bounds can be obtained. Finally, we will discuss several open problems in Section \ref{conclusion}.

\section{The Operator of Interest and Main Results}

Unless otherwise stated, the notation $L^2(\R^n)$ will always denote $L^2(\R^n,\mathbb C)$. The same convention holds for other function spaces. Motivated by the examples in the introduction, we are going to investigate Schr\"odinger Operators with growing real parts. 

\subsection{Definition of the Operator}

To begin with, let us quote results by \cite{BST2015} and \cite{EE} which allow the rigorous definition of a large class of Schr\"odinger operators.\footnote{
	The original proposition in \cite{BST2015} in fact allows even more general potentials than the one we state here.
}
\begin{prop}[\cite{BST2015,EE}]\label{BSTprop}
	Let $V\in  W^{1,\infty}_{\text{loc}}(\R^n)$ be a function such that
	\begin{enumerate}[(i)]
	\setlength\itemsep{0.5mm}
	\item $\re V \geq 0$
	\item 
	There exist $a,b>0$ such that $|\nabla V|^2\leq a + b|V|^2$
	\item $V$ is unbounded at infinity:	$|V(x)|\to\infty\quad\text{as}\quad |x|\to\infty$
	\end{enumerate}
	Then we have the following.
	\begin{enumerate}
	\setlength\itemsep{0.5mm}
	\item
	The minimal operator 
	\begin{equation}
		H_{\text{min}}:=-\Delta+V,\quad \mathcal D(H_{\text{min}}):=C^\infty_0(\R^n)
	\end{equation}
	is closable on $L^2(\R^n)$ with closure
	\begin{equation*}
		T=-\Delta+V,\quad\mathcal D(T)=H^2(\R^n)\cap \{\psi\in L^2(\R^n) : Vf\in L^2(\R^n)\};
	\end{equation*}
	\item
	$T$ is m-accretive;
	\item
	The resolvent of $T$ is compact.
	\end{enumerate}
\end{prop}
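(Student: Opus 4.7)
The plan is to handle the three assertions in turn, with hypothesis (ii) playing the role of the key quantitative input that turns formal manipulations on $C_0^\infty(\R^n)$ into statements about the closure. For part 1, I first observe that the formal adjoint expression $-\Delta+\overline V$ also makes sense on $C_0^\infty(\R^n)$, so $H_{\text{min}}^*$ is densely defined and hence $H_{\text{min}}$ is closable. To identify the closure with $T$ on $D:=H^2(\R^n)\cap\{\psi\in L^2(\R^n):V\psi\in L^2(\R^n)\}$, the aim is to establish the a priori estimate
\begin{equation*}
\|\Delta\psi\|^2+\|V\psi\|^2\leq C\big(\|H_{\text{min}}\psi\|^2+\|\psi\|^2\big),\quad\psi\in C_0^\infty(\R^n),
\end{equation*}
by expanding $\|(-\Delta+V)\psi\|^2$ and integrating by parts in the cross term $2\,\re\langle-\Delta\psi,V\psi\rangle$ to produce $-2\,\re\int\overline{\nabla\psi}\cdot(\nabla V)\,\psi$, which is absorbed into the right-hand side using hypothesis (ii) together with Young's inequality. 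A standard cutoff-and-mollify approximation then shows $C_0^\infty(\R^n)$ is dense in $D$ with respect to the equivalent graph norm, identifying the closure with $T$ on $D$.

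For part 2, accretivity of $H_{\text{min}}$ is immediate from (i): for $\psi\in C_0^\infty(\R^n)$,
\begin{equation*}
\re\big\langle\psi,H_{\text{min}}\psi\big\rangle=\|\nabla\psi\|^2+\int_{\R^n}\re V\,|\psi|^2\,\mathrm dx\geq 0,
\end{equation*}
which extends to $T$ by graph-norm density. The same computation applied to the formal adjoint $-\Delta+\overline V$---which satisfies hypotheses (i)--(iii) with the same constants---shows that $T^*$ is likewise closed and accretive. Invoking the standard criterion that a closed densely defined accretive operator with accretive adjoint is m-accretive then completes this part.

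For part 3, the a priori estimate from part 1 makes $(T-z)^{-1}$ continuous from $L^2(\R^n)$ into $D$ equipped with its graph norm, which controls both $\|\nabla\psi\|_{L^2}$ and $\|V\psi\|_{L^2}$. Hypothesis (iii) then yields a compact embedding $D\hookrightarrow L^2(\R^n)$ by a Riesz--Kolmogorov argument: the gradient bound gives equicontinuity of translations on compact sets, while $|V(x)|\to\infty$ combined with the bound on $\|V\psi\|_{L^2}$ gives tightness at infinity via $\int_{|x|>R}|\psi|^2\,\mathrm dx\leq (\inf_{|x|>R}|V|^2)^{-1}\|V\psi\|_{L^2}^2$. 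Composing the two maps yields compactness of the resolvent. The step I expect to be the principal obstacle is the cross-term estimate in part 1: hypothesis (ii) is exactly what is needed to absorb the gradient contribution produced by integration by parts, and without it the sum $\|\Delta\psi\|+\|V\psi\|$ cannot in general be controlled by $\|(-\Delta+V)\psi\|+\|\psi\|$, so the identification of the closure with $T$ on $D$ would fail.
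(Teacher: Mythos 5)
The paper does not prove this proposition; it is quoted from \cite{BST2015,EE} as a known result, so there is no in-paper argument to compare against. Your sketch follows the classical route for such operators: closability because the formal adjoint makes $H_{\text{min}}^*$ densely defined, an Everitt--Giertz-type separation estimate
$\|\Delta\psi\|^2+\|V\psi\|^2\leq C(\|H_{\text{min}}\psi\|^2+\|\psi\|^2)$
obtained by integrating by parts the cross term and absorbing the $\nabla V$ contribution via hypothesis (ii) and Young, accretivity from (i), and compactness of the resolvent from (iii) by a Riesz--Kolmogorov argument combining the gradient bound (equicontinuity) with the $\|V\psi\|_{L^2}$ bound (tightness at infinity). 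That is the correct skeleton, and parts 1 and 3 are essentially right modulo the usual care needed in the cutoff-and-mollify density step.

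The m-accretivity step, however, has a genuine gap. The criterion you invoke --- closed, densely defined, accretive, with accretive adjoint implies m-accretive --- is correct, and it reduces the problem to showing that $T^*$ is accretive. But your argument for this ("the same computation applied to the formal adjoint $-\Delta+\overline V$") establishes accretivity of the \emph{closure} $T'$ of $-\Delta+\overline V$ on $C_0^\infty(\R^n)$, i.e.\ of the \emph{minimal} realization of the formal adjoint. The operator $T^*=H_{\text{min}}^*$ is the \emph{maximal} realization: its domain consists of all $\phi\in L^2$ with $(-\Delta+\overline V)\phi\in L^2$ distributionally, with no a priori regularity. One only knows $T'\subset T^*$, and accretivity is not inherited by extensions, so accretivity of $T'$ does not transfer to $T^*$. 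Proving $T'=T^*$ is precisely the separation statement for the maximal domain, which is the nontrivial content of the cited references --- the argument as written is circular at exactly this point. The usual repair is to show $\ker(T^*+1)=\{0\}$ directly: if $(-\Delta+\overline V+1)\phi=0$ with $\phi\in L^2$, then elliptic regularity gives $\phi\in H^2_{\mathrm{loc}}$, and testing against $\chi_R^2\phi$ with a cutoff $\chi_R$ satisfying $\|\nabla\chi_R\|_\infty\to 0$ yields, after taking real parts and using $\re V\geq 0$,
\begin{equation*}
\|\chi_R\nabla\phi\|^2+\|\chi_R\phi\|^2\leq 2\|\nabla\chi_R\|_\infty\,\|\chi_R\nabla\phi\|\,\|\phi\|,
\end{equation*}
which forces $\phi=0$ as $R\to\infty$. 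Combined with closed range of $T+1$ (from accretivity of $T$), this gives surjectivity of $T+1$, and only afterwards can one identify $T^*$ with $T'$. Some version of this step, or an exhaustion/Galerkin argument for surjectivity, needs to be inserted; without it part 2 is not established.
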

Using the above proposition, let us define an operator $H$ on $L^2(\R^n)$ as follows. 
\begin{de}\label{Hdef}
	Let $V:\R^n\to\mathbb C$ satisfy the conditions of Prop \ref{BSTprop} and assume in addition that there exist constants $c,d>0$ such that
	\begin{equation}
		\re V(x)\geq c|x|^2-d.
	\end{equation}
	We denote by $H$ the linear operator $H:\mathcal D(H)\to L^2(\R^n)$ as the closure of
	$$H_{\text{min}}:=-\Delta+V\quad\text{ on }\quad C_0^\infty(\R^n).$$
	according to Proposition \ref{BSTprop}.
\end{de}

\subsection{Main Results}

From now on, unless otherwise stated, $H$ will denote the operator defined in Definition \ref{Hdef}. Our first result is the following.
\begin{lemma}\label{compactness}
	The one-parameter semigroup generated by $-H$ is immediately compact.
\end{lemma}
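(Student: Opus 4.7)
The plan is to prove the stronger statement that $e^{-tH}$ is Hilbert–Schmidt for every $t>0$, by dominating its (complex) integral kernel pointwise in absolute value by the (positive) kernel of a self-adjoint comparison semigroup, and showing the latter is Hilbert–Schmidt through comparison with the harmonic oscillator.

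First, I would introduce the auxiliary operator $H_{\mathrm{re}}:=-\Delta+\re V$. Since $\re V$ satisfies all hypotheses of Proposition \ref{BSTprop} and is real, $H_{\mathrm{re}}$ is self-adjoint with compact resolvent on the domain described there. The pointwise lower bound $\re V(x)\ge c|x|^2-d$ gives the form inequality $H_{\mathrm{re}}\ge -\Delta+c|x|^2-d$; since both sides generate positivity-preserving semigroups, this lifts to the pointwise kernel inequality $0\le K_{\mathrm{re}}(t,x,y)\le e^{td}K_{\text{HO}}(t,x,y)$, where $K_{\text{HO}}$ is the Mehler kernel of $-\Delta+c|x|^2$. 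Because $K_{\text{HO}}(t,\cdot,\cdot)$ is a Gaussian on $\R^n\times\R^n$ with strictly positive-definite exponent for every $t>0$, it lies in $L^2(\R^n\times\R^n)$, and hence so does $K_{\mathrm{re}}(t,\cdot,\cdot)$. Thus $e^{-tH_{\mathrm{re}}}$ is Hilbert–Schmidt.

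Second, I would establish the kernel bound $|K(t,x,y)|\le K_{\mathrm{re}}(t,x,y)$ for the integral kernel $K$ of $e^{-tH}$ via Feynman–Kac: writing $K(t,x,y)=p_t(x,y)\,\mathbb E_{x\to y,t}\!\left[e^{-\int_0^tV(B_s)\,ds}\right]$, with $p_t$ the free heat kernel and expectation taken over the Brownian bridge from $x$ to $y$, and using $|e^{-\int_0^tV(B_s)\,ds}|=e^{-\int_0^t\re V(B_s)\,ds}$, one obtains precisely the analogous representation of $K_{\mathrm{re}}$ on the right. Combined with the first step this yields $K(t,\cdot,\cdot)\in L^2(\R^n\times\R^n)$, so $e^{-tH}$ is Hilbert–Schmidt, in particular compact, for every $t>0$.

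The main technical obstacle is the rigorous justification of the Feynman–Kac formula for our unbounded complex potential $V$. For bounded, smooth $V$ this is classical; in our setting one argues by approximating $V$ by truncations $V_n:=V\,\chi_{\{|x|\le n\}}$, for which Feynman–Kac is standard, and passing to the limit via strong resolvent convergence $H_n\to H$ (a consequence of the domain description in Proposition \ref{BSTprop} and the fact that $V_n\to V$ in $L^2_{\mathrm{loc}}$). As a purely operator-theoretic alternative, one can invoke Trotter's product formula $e^{-tH}=\lim_n\bigl(e^{-(t/n)\Delta}e^{-(t/n)V}\bigr)^n$ strongly, observe that the analogous $H_{\mathrm{re}}$-product pointwise dominates the absolute values of the composite kernels at each stage, and pass to the limit to recover $|K|\le K_{\mathrm{re}}$ without probabilistic machinery.
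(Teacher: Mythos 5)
Your approach is correct in outline but follows a genuinely different route from the paper, and it is worth contrasting the two. The paper writes the complex evolution as a coupled real $2\times 2$ system and applies a parabolic maximum principle to the \emph{squared} modulus, obtaining a quadratic domination of the form $|f(t,x)|^2\le \bigl(e^{t\hat H_{2\kappa}}|f^0|^2\bigr)(x)$, where $\hat H_{2\kappa}=\Delta+2\kappa$ and $\kappa(x)=-c|x|^2+d$. The advantage of squaring is that it sidesteps the non-smoothness of $|u|$ at its zero set and makes the maximum principle argument elementary (one can literally integrate by parts). The price is that the quadratic domination does not directly produce an integral kernel bound for $e^{-tH}$; the paper therefore needs a second, qualitatively different ingredient: a cut-off operator $R_n(t)$ that converges to $Ue^{-tH}U^{-1}$ in operator norm, with compactness of each $R_n(t)$ coming from local parabolic $H^1$-regularity. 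Your proposal instead aims at the \emph{linear} kernel domination $|K_H(t,x,y)|\le K_{\mathrm{re}}(t,x,y)$, which, once combined with the Mehler-kernel comparison $K_{\mathrm{re}}\le e^{td}K_{\text{HO}}$ (also present in the paper, via Lemma~\ref{K<Gauss}), gives Hilbert--Schmidt of $e^{-tH}$ in one stroke and dispenses entirely with the cut-off and local-regularity step. This is cleaner and yields a slightly stronger conclusion, but it shifts all the technical burden onto the domination inequality itself, which for a complex unbounded potential is essentially Kato's distributional inequality $\Delta|u|\ge\operatorname{Re}\bigl(\operatorname{sgn}(\bar u)\Delta u\bigr)$ (equivalently a Feynman--Kac or Trotter argument, as you note). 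That step is standard but not trivial; you should also note that $\re V$ need not inherit the gradient hypothesis $|\nabla W|^2\le a+b|W|^2$ from $V$, so $H_{\mathrm{re}}$ is better defined directly via Kato's essential self-adjointness for $-\Delta+W$ with $W\ge 0$, $W\in L^2_{\mathrm{loc}}$, rather than by invoking Proposition~\ref{BSTprop}. With these caveats acknowledged, your plan is sound and would in fact prove the stronger statement that $e^{-tH}$ is Hilbert--Schmidt for every $t>0$.
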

This is used to prove our main theorem
\begin{theorem}\label{mainth}
Let $H$ be defined as in Definition \ref{Hdef}. Then for every $\delta,R>0$ there exists an $\eps>0$ such that
	\begin{equation}\label{maininc}
 		\sigma_\eps(H)\subset \{z:\re z \geq R\}\cup\bigcup_{\lambda\in\sigma(H)}\{z:|z-\lambda|<\delta \}.
	\end{equation}
\end{theorem}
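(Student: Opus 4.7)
The plan is to reduce the theorem to a uniform bound on the resolvent norm over the complement of the target set, and then take $\eps$ to be the reciprocal of that bound. Using Lemma \ref{compactness} together with the spectral mapping theorem for eventually compact semigroups, one has $\sigma(e^{-tH}) \setminus \{0\} = \{e^{-t\lambda_n}\}$ for every $t>0$; since $e^{-tH}$ is compact, this sequence accumulates only at $0$, forcing $\re \lambda_n \to \infty$. Consequently only finitely many eigenvalues $\lambda_1, \ldots, \lambda_N$ lie in the strip $\{\re z \leq R+\delta\}$. Setting
\[
K := \{z \in \mathbb C : \re z \leq R\} \setminus \bigcup_{i=1}^N \overline{B_\delta(\lambda_i)},
\]
one has $K \subset \rho(H)$, and since any eigenvalue with $\re \lambda > R+\delta$ is automatically at distance greater than $\delta$ from $\{\re z \leq R\}$, the set $K$ covers the complement of the right-hand side of \eqref{maininc}. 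The theorem thus reduces to showing $M := \sup_{z \in K} \|(z-H)^{-1}\| < \infty$.

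I would split $K$ into three regimes. For $\re z \leq -1$, m-accretivity of $H$ (Proposition \ref{BSTprop}) yields $\|(z-H)^{-1}\| \leq 1/|\re z| \leq 1$. On any set of the form $K \cap \{-1 \leq \re z \leq R,\; |\im z| \leq T\}$, the resolvent is norm-continuous on a compact subset of $\rho(H)$ and hence uniformly bounded. What remains is the semi-infinite strip $-1 \leq \re z \leq R$ with $|\im z|$ large, and the crux of the proof is to establish
\[
\lim_{|\im z|\to\infty} \|(z-H)^{-1}\| = 0 \quad \text{uniformly for } \re z \in [-1,R].
\]

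I would first verify this on the single line $\re z = -1$ using the Laplace representation $(z-H)^{-1} = -\int_0^\infty e^{zt} e^{-tH}\,dt$, valid because $-H$ generates a contraction semigroup. Splitting the integral at a threshold $t_0>0$, the contribution from $(0,t_0)$ is bounded by $t_0$ using $\|e^{-tH}\| \leq 1$, while on $[t_0,\infty)$ Lemma \ref{compactness} makes $t \mapsto e^{zt} e^{-tH}$ norm-continuous into $B(L^2(\R^n))$ with integrable majorant $e^{-t}$, so an operator-valued Riemann--Lebesgue argument sends this tail to $0$ as $|\im z|\to\infty$; letting $t_0\to 0$ yields vanishing on the line. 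To propagate it rightward across the strip I would use the resolvent identity: if $\|(-1+i\tau-H)^{-1}\| < 1/(2(R+1))$, the Neumann series gives $\|(z-H)^{-1}\| \leq 2\|(-1+i\tau-H)^{-1}\|$ for every $z$ with $\im z = \tau$ and $-1 \leq \re z \leq R$. Combining the three regimes yields the required uniform bound on $K$. The main obstacle, as I see it, is this operator-norm Riemann--Lebesgue step: it depends essentially on $e^{-tH}$ being norm-continuous --- indeed, compact --- for $t>0$, not merely strongly continuous, and that is precisely what Lemma \ref{compactness} provides.
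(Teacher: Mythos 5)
Your proof is correct, and it takes a genuinely different route from the paper. The paper decomposes $L^2(\R^n)$ via the Riesz projections $Q_0,\dots,Q_m$ onto the first few eigenspaces and their complement $\Ran(I-P_m)$, proves a norm inequality (Lemma \ref{BoultonLemma}) relating $\|(H-z)^{-1}\|$ to the resolvents of the restrictions, and bounds the infinite-dimensional piece via the semigroup growth-bound machinery (Theorems \ref{radius} and \ref{werner}); the $\re\lambda_n\to\infty$ input comes from Corollary \ref{specbound}, itself a consequence of norm-continuity. You instead work directly on the full resolvent: the Laplace representation of $(z-H)^{-1}$ on the line $\re z=-1$, an operator-valued Riemann--Lebesgue lemma made valid precisely by the norm-continuity of $e^{-tH}$ for $t>0$, and a Neumann-series bootstrap of the resolvent identity to sweep the decay rightward across the strip $\re z\in[-1,R]$. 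Both proofs feed the same key input (immediate compactness, hence eventual norm-continuity, of the semigroup) into different classical theorems. The paper's decomposition is the more modular route --- it is exactly what gets reused in Section \ref{examples} to extract the \emph{quantitative} rate $C(\log\tfrac{1}{\eps})^{6/5}$ for $H_B$ once the projection norms $\|Q_k\|$ are known --- whereas your argument is shorter and more self-contained but purely qualitative, as the Riemann--Lebesgue step gives no rate.

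Two small points worth tidying up rather than gaps. First, you remove the \emph{closed} balls $\overline{B_\delta(\lambda_i)}$ from $\{\re z\le R\}$ to form $K$, but the complement of the right-hand side of \eqref{maininc} removes only the \emph{open} balls; the spheres $\{|z-\lambda_i|=\delta\}\cap\{\re z\le R\}$ form a compact subset of $\rho(H)$ on which the resolvent norm is automatically bounded, so this is harmless but should be noted. Second, the tail integral $\int_{t_0}^\infty e^{zt}e^{-tH}\,dt$ must be recognised as a Bochner integral in $B(L^2(\R^n))$ (rather than only strongly) before Riemann--Lebesgue can be applied in operator norm; this works because norm-continuity of $t\mapsto e^{-tH}$ on $[t_0,\infty)$ makes the integrand norm-measurable with separable range and integrable majorant $e^{-t}$. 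You clearly have this in mind (``norm-continuous into $B(L^2(\R^n))$ with integrable majorant''), but it is the one place where strong continuity alone would genuinely not suffice, so it deserves to be spelled out.
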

We immediately obtain the following corollary about the so-called harmonic oscillator with imaginary cubic potential. 
\begin{cor}
Let $$H_c = -\f{d^2}{dx^2} + ix^3 + cx^2$$ for some $c>0$ be defined on $\mathcal D(H_c)=H^2(\R)\cap\{\psi\in\LR : x^3\psi\in
\LR\}\subset\LR$. Then one has the inclusion \eqref{maininc} for the pseudospectrum of $H_c$.
\end{cor}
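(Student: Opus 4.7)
The plan is to show that $H_c$ falls into the class of operators covered by Definition \ref{Hdef}, and then simply invoke Theorem \ref{mainth}. The potential here is $V(x)=ix^3+cx^2$, which is smooth (in particular, in $W^{1,\infty}_{\text{loc}}(\R)$), and I need to verify conditions (i)--(iii) of Proposition \ref{BSTprop}, the quadratic lower bound of Definition \ref{Hdef}, and that the domain given in the corollary matches the one produced by Proposition \ref{BSTprop}.

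First I would check the four growth/regularity conditions. We have $\re V(x)=cx^2\ge 0$, and since $c>0$ we trivially get $\re V(x)\ge c|x|^2-d$ with any $d>0$, giving condition (i) of Prop \ref{BSTprop} and the quadratic bound from Def \ref{Hdef} at once. For (iii), $|V(x)|^2=x^6+c^2x^4\to\infty$. For (ii) I would compute
\begin{equation*}
  |V'(x)|^2=|3ix^2+2cx|^2=9x^4+4c^2x^2,
\end{equation*}
and compare with $|V(x)|^2=x^6+c^2x^4$; since $9x^4\le 9+9x^6$ and $4c^2x^2\le 4c^2+4c^2x^4$ for all $x\in\R$, one can take, say, $a=9+4c^2$ and $b=\max(9,4)$ to obtain $|V'|^2\le a+b|V|^2$.

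The remaining point is the identification of domains. By Prop \ref{BSTprop}, the closure from Def \ref{Hdef} has domain $H^2(\R)\cap\{\psi\in\LR:V\psi\in\LR\}$. I would observe the pointwise estimate $|x^2|\le 1+|x^3|$ on $\R$, which gives $|x^2\psi|\le|\psi|+|x^3\psi|$, so whenever $\psi\in\LR$ and $x^3\psi\in\LR$ one automatically has $x^2\psi\in\LR$, and consequently $V\psi=(ix^3+cx^2)\psi\in\LR$. The reverse implication is immediate from the triangle inequality applied to $x^3\psi=\tfrac{1}{i}(V\psi-cx^2\psi)$ together with the same dominance estimate for $|x^2|$. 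Hence $\{\psi\in\LR:V\psi\in\LR\}=\{\psi\in\LR:x^3\psi\in\LR\}$, and the operator $H_c$ with the domain stated in the corollary coincides with the operator $H$ from Def \ref{Hdef} for this choice of $V$.

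With this identification in place, Theorem \ref{mainth} applies verbatim and yields the inclusion \eqref{maininc} for $\sigma_\eps(H_c)$. I do not expect any genuine obstacle: the only non-mechanical step is the domain identification, and it reduces to the elementary polynomial estimate above.
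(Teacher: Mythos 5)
Your proof is correct and follows exactly the route the paper intends: verify that $V(x)=ix^3+cx^2$ satisfies the hypotheses of Proposition \ref{BSTprop} together with the quadratic lower bound from Definition \ref{Hdef}, check that the stated domain coincides with the one from Proposition \ref{BSTprop}, and then apply Theorem \ref{mainth}. The paper treats this as immediate and omits the (routine but worthwhile) gradient and domain verifications that you spell out.
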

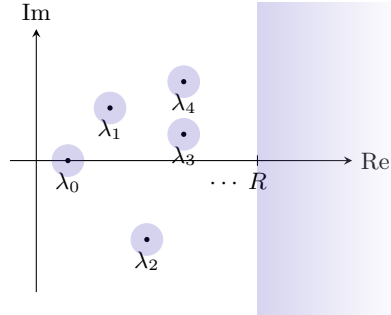
\begin{wrapfigure}[12]{r}{0.4\textwidth} 
\setlength{\abovecaptionskip}{3pt plus 1pt minus 1pt}
\setlength{\belowcaptionskip}{3pt plus 1pt minus 2pt}
%\begin{figure}
\vspace{-5.5mm}
\centering
\begin{tikzpicture}[>=stealth,scale=0.7]
\footnotesize{
    \path[use as bounding box] (-0.5,-3) rectangle (7,3);
    \definecolor{niceblue}{rgb}{0.2,0.15,0.7}
    \draw [->] (0,-2.5)--(0,2.5) node[above]{Im};
    \draw [->] (-0.5,0)--(6,0) node[right]{Re};
    \draw (0.6,0) node[circle,fill,inner sep=.7pt,label=below:$\lambda_0$]{} ;
    \draw (1.4,1) node[circle,fill,inner sep=.7pt,label=below:$\lambda_1$]{} ;
    \draw (2.1,-1.5) node[circle,fill,inner sep=.7pt,label=below:$\lambda_2$]{} ;
    \draw (2.8,0.5) node[circle,fill,inner sep=.7pt,label=below:$\lambda_3$]{} ;
    \draw (2.8,1.5) node[circle,fill,inner sep=.7pt,label=below:$\lambda_4$]{} ;
    \draw (3.6,0) node[label=below:$\cdots$]{} ;
%    \draw (4.6,-1.5) node[circle,fill,inner sep=.7pt,label=below:$\lambda_m$]{} ;
    \draw (4.2,-.1) --(4.2,.1) node[label=below:$R$]{} ;
    \filldraw[niceblue,opacity=.2] (2.1,-1.5) circle (.3cm) ;
    \filldraw[niceblue,opacity=.2] (2.8,0.5) circle (.3cm) ;
    \filldraw[niceblue,opacity=.2] (2.8,1.5) circle (.3cm) ;
    \filldraw[niceblue,opacity=.2] (1.4,1)  circle (.3cm) ;
    \filldraw[niceblue,opacity=.2] (.6,0) circle (.3cm) ;
    \shade[left color=niceblue,opacity=.2,right color=white] (4.2,-3) rectangle (7,3);}
%    \draw[color=white] (9,0)--(9.01,0);
 \end{tikzpicture}
    \caption{The pseudospectrum of $H$ is contained in sets of the above shape.}
%\end{figure}
\vspace{-5mm}
\end{wrapfigure}
We remark that the inclusion \eqref{maininc} is optimal in the sense that the unbounded component of the pseudospectrum cannot be contained in a sector of opening angle less than $\pi$ as  \eqref{pessim} shows. 

Moreover, Theorem \ref{mainth} can be seen as complementary to the results of \cite{NV}. Indeed, while it was shown there that there always exist infinitely many eigenvalues which are highly unstable under bounded perturbations, our result shows that the \emph{lower} eigenvalues (that is, those with small real part) \emph{do} remain stable if the perturbation is small enough in norm.

The method of proof of Theorem \ref{mainth} is inspired by \cite{Boulton} and based on estimates of the semigroup generated by $-H$.

\section{Proof of Theorem \ref{mainth}}

In this section we will first prove Lemma \ref{compactness} and then use it to prove Theorem \ref{mainth}. Throughout this section, $H$ denotes the operator defined in Lemma \ref{compactness} and we will make frequent use of properties 1., 2., 3. of Proposition \ref{BSTprop} without further reference.

\subsection{Proof of Lemma \ref{compactness}}

 It is well-known \cite{DW,Brezis} that for all $\phi_0\in L^2(\R^n)$ the semigroup generated by $-H$ is nothing but the solution operator to the initial value problem

\begin{equation}\la{complexeq}
 	\begin{cases}
 		\partial_t\phi &= -H\phi \\
	 	\phi(0) &= \phi_0.
 	\end{cases}
\end{equation}
In this section we will show that the operator $e^{-tH}$ is compact on $ L^2(\R^n) $. The first step will be to turn \eqref{complexeq} into a coupled system of real equations and then using the results of \cite{DL}.

\paragraph{Writing the equation as a system.}

We will use the fact that $L^2(\R^n,\mathbb C)$ is canonically isomorphic to $L^2(\R^n,\R^2)$. In the following we will denote this isomorphism by $U:L^2(\R^n,\mathbb C)\to L^2(\R^n,\R^2)$.

Now, let us write $\phi(x)=f_1(x)+i f_2(x)$. A straightforward calculation shows that \eqref{complexeq} is equivalent to the system
\begin{align}
\begin{dcases}\label{system}
 	\del_t f_1 &= \Delta f_1+\imag(V) f_2-\re(V)f_1 \\
 	\del_t f_2 &= \Delta f_2-\imag(V)f_1-\re(V)f_2
 \end{dcases}
\end{align}
which we will write as
\begin{align*}
 	\del_t \begin{pmatrix}
 		f_1 \\ f_2
 	\end{pmatrix}
 	&=
 	\left[
 		\Delta + Q(x)
 	\right]
 	\begin{pmatrix}
 		f_1 \\ f_2
 	\end{pmatrix} \\
 	&=:-\hat H\begin{pmatrix}
 		f_1 \\ f_2
 	\end{pmatrix},
\end{align*}
where $Q(x)=\begin{pmatrix} -\re V(x) & \imag V(x) \\ -\imag V(x) & -\re V(x) \end{pmatrix}$. Along the lines of 
\cite{DL}
 we define $\kappa(x):=-c|x|^2+d$ (with $c,d$ from Definition \ref{Hdef}) which satisfies the estimate
\begin{equation}
	\langle Q(x)\xi,\xi\rangle\leq\kappa(x)\|\xi\|^2\qquad\forall\xi\in\R^2,
\end{equation}
according to our assumptions about $V$. We also define the scalar differential operator\footnote{
	More precisely, $\hat H_{2\kappa}$ should be regarded as the $L^2$-closure of the operator initially defined on the space $\mathcal S(\R^n,\R^2)$.
}
\begin{equation}
 	\hat H_{2\kappa}:=\Delta+2\kappa(x)  \quad\text{ on }\quad L^2(\R^n,\R).
\end{equation}
The operators $-\hat H$ and $H_{2\kappa}$ satisfy Hypothesis 2.1 of 
\cite{DL} enabling us to prove the following lemma by following the lines of the proof of \cite[Prop. 2.4]{DL}.

\begin{lemma}\label{U<SF}
	Let $ f^0\in \mathcal{S}(\R^n,\R^2)$. There exists a unique classical solution to the initial value problem $[$\eqref{system}, $f(0,\cdot)=f^0$$]$ and one has
	\begin{equation}
		|f(t,\cdot)|^2 \leq  e^{t\hat{H}_{2\kappa}} \big(| f^0 |^2\big),\quad t\geq 0.
	\end{equation}
\end{lemma}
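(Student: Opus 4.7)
My plan is to adapt the argument of \cite[Prop.~2.4]{DL} verbatim: the preceding paragraphs have set up the system \eqref{system} and the scalar comparison operator $\hat H_{2\kappa}$ precisely so that Hypothesis~2.1 of \cite{DL} is satisfied. For existence and uniqueness of a classical solution, I would observe that the zeroth-order coefficients appearing in $Q(x)$ lie in $W^{1,\infty}_{\text{loc}}$ and satisfy the growth bound of Proposition~\ref{BSTprop}(ii); standard linear parabolic theory, together with the Schwartz decay of $f^0$, then produces a unique classical solution $f\in C^1([0,\infty),L^2(\R^n,\R^2))$ with enough spatial regularity and decay at infinity to justify the pointwise manipulations below.

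Next I would derive a pointwise differential inequality for $u:=|f|^2$. Using \eqref{system} together with the identity $\Delta|f|^2 = 2f\cdot\Delta f + 2|\nabla f|^2$, a direct computation gives
\begin{equation*}
\partial_t|f|^2 \;=\; \Delta|f|^2 - 2|\nabla f|^2 + 2\langle Q(x)f,f\rangle.
\end{equation*}
Dropping the nonpositive gradient term and invoking the quadratic-form bound $\langle Q(x)\xi,\xi\rangle\le \kappa(x)|\xi|^2$ (which follows from the fact that the symmetric part of $Q$ is $-\re V(x)\cdot I$ together with the lower bound on $\re V$ from Definition~\ref{Hdef}) yields the pointwise inequality $\partial_t u \le \Delta u + 2\kappa(x)\,u = \hat H_{2\kappa}\,u$.

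Finally I would compare with the scalar evolution driven by $\hat H_{2\kappa}$. Setting $v(t,\cdot):=e^{t\hat H_{2\kappa}}(|f^0|^2)$, the difference $w:=u-v$ satisfies $w(0,\cdot)=0$ and $\partial_t w\le \hat H_{2\kappa}w$; since $2\kappa$ is bounded from above, the scalar semigroup $e^{t\hat H_{2\kappa}}$ is well defined and positivity-preserving (e.g.\ by Feynman--Kac), so a parabolic maximum principle gives $w\le 0$, which is the required estimate. I expect this last step to be the main obstacle: upgrading a distributional differential inequality to a genuine \emph{pointwise} domination in the presence of a potential that tends to $-\infty$ at infinity requires enough decay of both $u$ and $v$ that no mass escapes to spoil the comparison. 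This is precisely the issue handled in the analogous step of \cite[Prop.~2.4]{DL}, and the rapid decay coming from $f^0\in\mathcal{S}(\R^n,\R^2)$ together with the structural assumptions on $V$ is what makes the argument close.
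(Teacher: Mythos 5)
Your proposal follows essentially the same route as the paper: derive the pointwise inequality $\partial_t|f|^2 \le \hat H_{2\kappa}|f|^2$ by dropping the nonnegative terms $2|\nabla f|^2$ and $2W|f|^2$, set $w=|f|^2-e^{t\hat H_{2\kappa}}(|f^0|^2)$, and invoke the parabolic maximum principle (the paper cites \cite[Prop.~2.3(ii)]{DL} for this, and \cite[Th.~2.6]{DL} for existence and uniqueness of the classical solution). The argument is correct and matches the paper's.
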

\begin{proof}
	This proof uses the local H\"older continuity of $V$. By \cite[Th. 2.6]{DL} there exists a unique classical solution $f=(f_1,f_2)$ for our choice of initial condition. Let us now multiply the first equation of \eqref{system} by $f_1$ and the second by $f_2$ and add the resulting equations. We obtain
	\begin{equation*}
		\f 1 2 \del_t |f|^2 = f\cdot \Delta f - \re(V)|f|^2.
	\end{equation*}
	Using the product rule this may be rewritten as 
	\begin{align*}
		\del_t |f|^2 &= (\Delta - 2\re V)|f|^2 -2|\nabla f|^2 \\
		&= \big( \Delta - 2\kappa(x)  - 2W(x)\big)|f|^2 -2|\nabla f|^2 \\
		&= \hat{H}_{2\kappa}  \big(|f|^2\big) -2\big(W(x)|f|^2 + |\nabla f|^2\big),
	\end{align*}
	where we have defined $W(x):=\re V(x) - \kappa(x)  \geq 0$.
	Now, define $w:=|f|^2- e^{t\hat{H}_{2\kappa}} \big(|f^0|^2\big)$. We obviously have $w(0,\cdot)=0$ and from the above calculation we obtain
	\begin{equation*}
		(\del_t - \Delta + 2\kappa(x)  )w\leq 0,\quad t>0.
	\end{equation*}
	Thus applying the maximum principle \cite[Prop. 2.3 (ii)]{DL} we obtain $w\leq 0$.
\end{proof}

\paragraph{The operator $\boldsymbol{\hat H_{2\kappa}}$.}

Regarded as an operator on $L^2(\R^n,\R)$, the operator { $-\hat H_{2\kappa}$  is of course nothing but the harmonic oscillator with frequency $\omega=\sqrt{8c}$, shifted by the constant $-2d$. Its negative is well-known to generate a one-parameter semigroup $e^{t\hat H_{2\kappa}}$  which can be represented by the Mehler kernel
\begin{align*}
	\big( e^{t\hat H_{2\kappa}}  g\big)(t,x) &=  e^{2td}\Big(\f{2\pi}{\omega} \sinh(2\omega t)\Big)^{-\f 1 2} \int  
	e^{
	-\f{\omega}{2}\f{\cosh(2\omega t)(|x|^2+|y|^2)-2x\cdot y}{\sinh(2\omega t)}
	}
	g(y)\,dy\\
	&=: \int K(t,x,y)g(y)\,dy
\end{align*}
(cf. \cite[Chapter 7.2]{DA}).

\begin{lemma}\label{K<Gauss}
	Let $t>0$ and $0<\alpha \leq \cosh(2\omega t)-1$ and define $\mu(x):=e^{-\f{\alpha\omega}{2\sinh(2\omega t)} |x|^2}$. Then 
	\begin{equation}\label{K<C}
		|K(t,x,y)| \leq C\, \mu(x)\mu(y),
	\end{equation}
	where $C$ depends only on $t$ and $\omega$.
\end{lemma}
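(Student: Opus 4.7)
The plan is to directly manipulate the quadratic form appearing in the exponent of the Mehler kernel. First I would absorb the prefactor $e^{2td}\bigl(\tfrac{2\pi}{\omega}\sinh(2\omega t)\bigr)^{-1/2}$ into the constant $C$, since for fixed $t>0$ and $\omega>0$ it is finite (and depends only on $t$ and $\omega$). The entire content of the lemma is then to control the Gaussian factor
$$\exp\!\left(-\frac{\omega}{2\sinh(2\omega t)}\bigl[\cosh(2\omega t)(|x|^2+|y|^2)-2x\cdot y\bigr]\right)$$
by $\mu(x)\mu(y)$.

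The key algebraic step is to split $\cosh(2\omega t)=\alpha+(\cosh(2\omega t)-\alpha)$ inside the bracket. This rewrites the exponent as
$$-\frac{\alpha\omega}{2\sinh(2\omega t)}(|x|^2+|y|^2)\;-\;\frac{\omega}{2\sinh(2\omega t)}\underbrace{\bigl[(\cosh(2\omega t)-\alpha)(|x|^2+|y|^2)-2x\cdot y\bigr]}_{=:\,Q(x,y)}.$$
The first term is precisely $\log(\mu(x)\mu(y))$, so it suffices to verify that the residual term is non-positive, i.e.\ that $Q(x,y)\geq 0$.

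For this I would use the hypothesis $\alpha\leq\cosh(2\omega t)-1$, which gives $\cosh(2\omega t)-\alpha\geq 1$, whence
$$Q(x,y)\;\geq\;|x|^2+|y|^2-2x\cdot y\;=\;|x-y|^2\;\geq\;0.$$
Since $\sinh(2\omega t)>0$, the coefficient $\omega/(2\sinh(2\omega t))$ is positive, so the exponential of the residual term is bounded by $1$, yielding $|K(t,x,y)|\leq C\,\mu(x)\mu(y)$ with $C=e^{2td}\bigl(\tfrac{2\pi}{\omega}\sinh(2\omega t)\bigr)^{-1/2}$.

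I do not anticipate any genuine obstacle: once the splitting $\cosh(2\omega t)=\alpha+(\cosh(2\omega t)-\alpha)$ is performed, the hypothesis on $\alpha$ is engineered exactly so that the leftover quadratic form reduces to $|x-y|^2$. The only mild care needed is in bookkeeping the sign (ensuring $\sinh(2\omega t)>0$ for $t>0$) and in absorbing the dimension-dependent prefactor into $C$ without it contaminating the $x,y$ dependence.
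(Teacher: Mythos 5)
Your proof is correct and takes essentially the same route as the paper: the paper's proof simply states the scalar inequality $\alpha(|x|^2+|y|^2) \le \cosh(2\omega t)(|x|^2+|y|^2) \mp 2x\cdot y$ and asserts it follows from the hypothesis on $\alpha$, while you supply the (natural) details by splitting $\cosh(2\omega t)=\alpha+(\cosh(2\omega t)-\alpha)$ and observing that the leftover quadratic form is bounded below by $|x-y|^2$. No substantive difference.
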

\begin{proof}
	We only have to check that $-\alpha(|x|^2+|y|^2)\geq -\cosh(2\omega t)(|x|^2+|y|^2)-2x\cdot y$. This follows immediately from the assumption on $\alpha$. Note that $\cosh(2\omega t)-1>0$ for $t>0$, so such an $\alpha$ exists.
\end{proof}
Note that this lemma implies that  $e^{t\hat H_{2\kappa}}$  is a Hilbert-Schmidt operator.

\paragraph{Compactness of $\boldsymbol{e^{-tH}}$.
}
The following lemma states that a cut-off version of $e^{-tH}$ converges in norm to $e^{-tH}$.

\begin{lemma}\label{RntoT}
	Let $t>0$ and $\theta_n\in C_c(\R^n)$ such that $\chi_{B_{r_n}(0)}\leq \theta_n\leq \chi_{B_{2r_n}(0)}$, where $r_n$ is defined such that
\begin{equation}
 	\sup_{x\in \R^n\backslash B_{r_n}(0)} \big(\mu(x)\big) < \f{ 1}{ n^2}
\end{equation}
(where $\mu$ was defined in Lemma \ref{K<Gauss}) and define the operator $R_n(t)$ by $$R_n(t) f := (Ue^{-\f t 2 H}U^{-1})\big(\theta_n (Ue^{-\f t 2  H}U^{-1}) f\big).$$ Then
\begin{equation}
 		\|Ue^{-tH }U^{-1}-R_n(t)\|_{L^2(\R^n,\R^2)}\to 0\qquad (n\to\infty).
\end{equation}
\end{lemma}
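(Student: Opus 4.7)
The plan is to use the semigroup property together with the pointwise bound of Lemma \ref{U<SF} and the Gaussian smallness from Lemma \ref{K<Gauss} to convert the operator-norm difference into an elementary estimate that decays in $n$. Writing $S := Ue^{-(t/2)H}U^{-1}$, which is a contraction on $L^2(\R^n,\R^2)$ since $H$ is m-accretive and $U$ is an isometric isomorphism, the semigroup property yields $Ue^{-tH}U^{-1} = S\circ S$. Substituting the definition of $R_n(t)$ one obtains the identity
\begin{equation*}
    Ue^{-tH}U^{-1} - R_n(t) \;=\; S\,(I-\theta_n)\,S,
\end{equation*}
so the task reduces to showing $\|S(I-\theta_n)S\|_{L^2\to L^2}\to 0$.

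The key step is to apply Lemma \ref{U<SF} (at time $t/2$) which provides, for suitable initial data $g$, the pointwise bound
\begin{equation*}
    |Sg|^2(x)\;\leq\;\bigl(e^{(t/2)\hat H_{2\kappa}}|g|^2\bigr)(x)\;=\;\int K(t/2,x,y)\,|g|^2(y)\,dy.
\end{equation*}
I would use this with $g := (I-\theta_n)\,Sf$. By construction $g$ vanishes on $B_{r_n}(0)$, so combining Lemma \ref{K<Gauss} with the defining property of $r_n$, namely $\mu(y)<1/n^2$ off $B_{r_n}(0)$, gives
\begin{equation*}
    |S(I-\theta_n)Sf|^2(x)\;\leq\;C\,\mu(x)\int \mu(y)\bigl(1-\theta_n(y)\bigr)^{\!2}|Sf|^2(y)\,dy\;\leq\;\f{C\,\mu(x)}{n^2}\,\|Sf\|_{L^2}^2.
\end{equation*}
Integrating in $x$ and using that $\mu$ is Gaussian (hence $\|\mu\|_{L^1}<\infty$), together with the contraction bound $\|Sf\|_{L^2}\leq\|f\|_{L^2}$, yields
\begin{equation*}
    \|S(I-\theta_n)Sf\|_{L^2}^2\;\leq\;\f{C'}{n^2}\,\|f\|_{L^2}^2,
\end{equation*}
which gives the desired norm convergence.

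The technical obstacle is that Lemma \ref{U<SF} is stated for initial data in $\mathcal{S}(\R^n,\R^2)$, whereas $(I-\theta_n)Sf$ need not be Schwartz even for Schwartz $f$. I would circumvent this either by approximating $(I-\theta_n)Sf$ in $L^2$ by Schwartz functions and using continuity of $S$ and of integration against $K(t/2,x,\cdot)$ to pass to the limit, or by using the integrated form of the bound, $\|Sg\|_{L^2}^2 \leq \int\!\!\int K(t/2,x,y)|g|^2(y)\,dy\,dx$, which extends to all $g\in L^2$ by density and bypasses the pointwise statement entirely. Once the estimate holds on a dense subspace, boundedness of $S$ and $\theta_n$ allows it to be extended to all of $L^2(\R^n,\R^2)$, concluding the proof.
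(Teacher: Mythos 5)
Your proposal is correct and follows essentially the same route as the paper's proof: rewrite $Ue^{-tH}U^{-1}-R_n(t) = S(I-\theta_n)S$ via the semigroup property, apply the comparison bound of Lemma \ref{U<SF} at time $t/2$ with initial data $(I-\theta_n)Sf$, then use the Gaussian majorization of the Mehler kernel (Lemma \ref{K<Gauss}) together with $\sup_{\R^n\setminus B_{r_n}}\mu < 1/n^2$ and integrability of $\mu$ to obtain an $O(1/n^2)$ bound, finishing by contractivity of $S$ and density. One small remark: you correctly flag that Lemma \ref{U<SF} is stated for Schwartz data while $(I-\theta_n)Sf$ need not be Schwartz; the paper glosses over this by simply starting from $f\in\mathcal S(\R^n,\R^2)$ and invoking density only at the very end, so your observation and the proposed fix (pass to the integrated inequality and extend by density, or approximate $(I-\theta_n)Sf$ in $L^2$) is a genuine, if minor, sharpening of the argument rather than a deviation from it.
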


\begin{proof}
Let $n\in\mathbb N$ and $ f\in \mathcal{S}(\R^n,\R^2)$ and compute
\begin{align*}
 	|Ue^{-tH}U^{-1} f(x)-R_n(t) f(x)|^2 \!&\stackrel{\scriptscriptstyle{{\text{Lemma }\ref{U<SF}}}}{\leq} \!\!
 	 e^{t\hat H_{2\kappa}}  \big(|Ue^{-\f t 2  H }U^{-1} f-\theta_n (Ue^{-\f t 2  H }U^{-1}) f|^2\big)(x) \\
 	&\stackrel{\phantom{\scriptscriptstyle{\text{Lemma }\ref{U<SF}}}}{=}\!\!\!\!\!\!\int K\big(\tfrac t 2 ,x,y\big)\big|(1-\theta_n(y))(Ue^{-\f t 2  H }U^{-1}) f(y)\big|^2\,dy
\end{align*}
Now integrate both sides over $x$.
\begin{align*}
 	\|Ue^{-tH }U^{-1} f-R_n(t) f\|_{L^2}^2 &\leq \iint K\big(\tfrac t 2 ,x,y\big)\big|(1-\theta_n(y))(Ue^{-\f t 2  H }U^{-1}) f(y)\big|^2\,dx dy \\
 	&\leq C \iint \mu(x)\mu(y)|1-\theta_n(y)|^2\,|(Ue^{-\f t 2  H }U^{-1}) f(y)|^2\,dxdy\\
 	&\leq \!C\! \left(\!\int\!\!\mu(x)dx\!\right)\! \|\mu(y)(1\!-\theta_n(y))^2\|_{\scriptscriptstyle{\infty}}\! \!\int\! |(Ue^{-\f t 2  H }U^{-1}\!) f\!(y)|^2 dy \\
 	&\leq C' \Big(\sup_{\scriptscriptstyle{y\in \R^n\backslash B_{r_n}}}\! \mu(y)\Big)\, \|(Ue^{-\f t 2  H }U^{-1}) f\|_{L^2}^2 \\
 	&\leq \f{ M}{ n^2} \|(Ue^{-\f t 2  H }U^{-1}) f\|_{L^2}^2
\end{align*}
for some $M>0$. Using the unitarity of $U$ and the fact that $e^{-\f t 2 H }$ is a bounded operator on $L^2(\R^n,\mathbb C)$ we finally arrive at
\begin{equation}
 	\|Ue^{-tH }U^{-1} f-R_n(t) f\|_{L^2(\R^n,\R^2)}^2 \leq \left(\f{ M}{ n^2} \|e^{-\f t 2  H }\|^2\right)\| f\|_{L^2(\R^n,\R^2)}^2.
\end{equation}
By density of $\mathcal S(\R^n,\R^2)$ we conclude that this inequality is valid for all $ f\in L^2(\R^n,\R^2)$. This immediately yields
\begin{equation}
 	\|Ue^{-tH }U^{-1}-R_n(t)\|_{L^2(\R^n,\R^2)} \leq \f {L}{ {n}}
\end{equation}
for some $L>0$.
\end{proof}
We can now use Lemma \ref{RntoT} to prove Lemma \ref{compactness}.
By closedness of the set of compact operators and Lemma \ref{RntoT} we only have to show that $R_n(\tau)$ is compact for every $n$ (cf. Theorem 7.1.4 in \cite{DA2}). Since furthermore $Ue^{-\f \tau 2 H }U^{-1}$ is a bounded operator on $L^2(\R^n,\mathbb C)$, we only show that $T_n(\tau) := \theta_n Ue^{-\f \tau 2  H }U^{-1}$ is compact.

This compactness can be established by a standard local regularity argument for parabolic PDE. Consider thus the function 
$$
v(x,t) := \eta(t)\xi_n(x) f(x,t),
$$ 
with $\eta(t) \in C^\infty([0,1])$ such that $\eta(0)=0$ and $\eta(t) = 1$ for $t>\tau/4$ and with $\xi_n \in C^\infty(\R^n)$, $\xi_n\ge 0$ such that $\chi_{B_{4r_n}}\leq \xi_n \leq \chi_{B_{8r_n}}$. A simple calculation shows that $v$ satisfies
$$
v_t + H v = w_n
$$
where $w_n = \xi_n(\del_t\eta)f-\eta(\Delta\xi_n)f-2\eta\nabla \xi_n\cdot \nabla f$. Note that $w_n$ is in the dual space of $H^1$ with bounded norm $\|w_n\|_{L^2(0,1;H^{-1}(\R^n))} \le C \|f\|_{L^2(0,1;L^2(\R^n))}$ for some $C>0$ as it contains only first spatial distributional derivatives of an $L^2$ function. By a standard Galerkin approximation we therefore immediately obtain an $L^2(0,1;H^1(\R^n))$-bound on $v$, which implies the same bound on $u$ on the domain $(\tau/4, 1)\times B_{4r_n}(0)$ where $f$ and $v$ agree.

Repeating this procedure, but cutting off between $\tau/4$ and
$\tau/2$ in time and $2r_n$ and $4r_n$ in space, yields higher
regularity of $f$ on $(\tau/2,1)\times B_{2r_n}(0)$ by noting that
the new $w_n$ is now bounded in $L^2(0,1;L^2(\R^n))$. The bound we
obtain is thus 
$$
\|f\|_{L^\infty(\tau/2,1;H^1(B_{2r_n}))} \le C \|f\|_{L^2(0,1;L^2(\R^n))} \le C\|f^0\|_{L^2(\R^n)},
$$
Since $\theta_n$ is smooth and $\operatorname{supp} \theta_n \subset \overline{B_{2r_n}(0)}$, this implies that
$$
\|T_n(\tau)f^0\|_{H^1(\R^n)} \le C\|f^0\|_{L^2(\R^n)}.
$$
The desired compactness result follows immediately.
\qed
\begin{cor}\label{norcon}
	The semigroup $e^{-tH}$ is norm-continuous. 
\end{cor}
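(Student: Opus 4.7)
The plan is to derive norm-continuity for $t>0$ from the immediate compactness established in Lemma \ref{compactness} via a standard semigroup-theoretic argument. Since $H$ is m-accretive by Proposition \ref{BSTprop}, the Lumer--Phillips theorem guarantees that $-H$ generates a contraction $C_0$-semigroup on $L^2(\R^n)$, so $\|e^{-tH}\|\leq 1$ for every $t\geq 0$ and $e^{-hH}\to I$ strongly as $h\to 0^+$.

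The key tool is the following elementary observation: if $K\in\mathcal{B}(L^2(\R^n))$ is compact and $\{B_s\}_{s>0}$ is a uniformly bounded family of operators with $B_s\to 0$ strongly as $s\to 0^+$, then $\|B_sK\|\to 0$. Indeed, $K$ maps the closed unit ball into a relatively compact set, and pointwise convergence of a uniformly bounded family is automatically uniform on compact subsets.

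With this in hand, fix $t_0>0$ and establish right- and left-continuity at $t_0$ separately. For right-continuity I would write
\begin{equation*}
 e^{-(t_0+h)H}-e^{-t_0H}=(e^{-hH}-I)\,e^{-t_0H},\qquad h>0,
\end{equation*}
and apply the observation with the fixed compact operator $K=e^{-t_0H}$ (supplied by Lemma \ref{compactness}) and $B_h=e^{-hH}-I$. For left-continuity, for $0<s<t_0/2$ I would use the semigroup law to split
\begin{equation*}
 e^{-t_0H}-e^{-(t_0-s)H}=(e^{-sH}-I)\,e^{-(t_0-s)H}=(e^{-sH}-I)\,e^{-t_0H/2}\cdot e^{-(t_0/2-s)H}.
\end{equation*}
The last factor has norm at most $1$ by contractivity, while $(e^{-sH}-I)\,e^{-t_0H/2}\to 0$ in norm by the same observation applied now with the fixed compact $K=e^{-t_0H/2}$.

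The only mild subtlety --- not really an obstacle --- is the left-continuity, where naively the compact factor $e^{-(t_0-s)H}$ varies with $s$; this is precisely what the splitting into a fixed compact operator $e^{-t_0H/2}$ times the uniformly bounded remainder $e^{-(t_0/2-s)H}$ is designed to resolve.
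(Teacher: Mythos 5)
Your argument is correct, and it gives a self-contained, elementary proof of exactly the implication that the paper delegates to a citation. The paper's proof consists of a single reference to \cite[Th.~4.29]{engel2000one}, which is the characterization ``$(T(t))$ is immediately compact $\Leftrightarrow$ $(T(t))$ is immediately norm-continuous and the generator has compact resolvent''; the paper invokes both hypotheses (compactness of $e^{-tH}$ and of the resolvent), but as your proof shows, only the forward direction of that equivalence is needed, and that direction uses nothing but the compactness of the semigroup itself together with uniform boundedness and strong continuity. Your decomposition into right- and left-continuity, with the factorization $e^{-(t_0-s)H}=e^{-t_0H/2}\,e^{-(t_0/2-s)H}$ to isolate a fixed compact factor on the left-continuity side, is precisely the standard argument underlying the cited theorem (cf.\ Pazy or Engel--Nagel), and your ``key observation'' --- compact operator times a uniformly bounded, strongly null family converges to zero in norm --- is verified correctly via uniform convergence on relatively compact sets. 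In short: you reproduce, rather than merely cite, the relevant half of the textbook result, and in doing so you make explicit that compactness of $(H-\lambda)^{-1}$ plays no role here; the only price is a slightly longer write-up.
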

\begin{proof}
	This follows from the above compactness result, together with compactness of $(H-\lambda)^{-1}$; cf. \cite[Th. 4.29]{engel2000one}
\end{proof}
Note that by Corollary \ref{norcon}, we obtain the following bound on the spectrum of $H$ (cf. \cite[Th. 4.18]{engel2000one}):
\begin{cor}\label{specbound}
	Let $b\in\R$. Then the set
	$$ 
	\left\{ \lambda\in\sigma(H) : \re\lambda\le b \right\}
	$$
	is bounded.
\end{cor}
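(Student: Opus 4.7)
The plan is to combine the two structural facts that have just been established. By Lemma~\ref{compactness}, $T(t) := e^{-tH}$ is immediately compact, so $\sigma(T(t))$ is countable with $0$ as its only possible accumulation point; in particular, $\sigma(T(t))\cap\{|z|\ge \eps\}$ is finite for every $\eps>0$. By Corollary~\ref{norcon}, $T(\cdot)$ is immediately norm-continuous, which is precisely the hypothesis needed for the \emph{strong} spectral mapping theorem $\sigma(T(t))\setminus\{0\} = \{e^{-t\lambda} : \lambda\in\sigma(H)\}$, valid for every $t>0$. This is Engel--Nagel, Thm.~IV.3.10, and is what the reference to Thm.~4.18 in \cite{engel2000one} ultimately rests on.

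With these two inputs at hand, I would proceed as follows. Fix $b\in\R$ and $t_1>0$, and let $\lambda\in\sigma(H)$ with $\re\lambda\le b$. Since $|e^{-t_1\lambda}| = e^{-t_1\re\lambda}\ge e^{-t_1 b}$, we obtain
\begin{equation*}
\{e^{-t_1\lambda} : \lambda\in\sigma(H),\ \re\lambda\le b\}\subset \sigma(T(t_1))\cap\{|z|\ge e^{-t_1 b}\},
\end{equation*}
and the right-hand side is a finite set by compactness. In particular, the set of admissible \emph{real parts} $\re\lambda$ is forced to lie in the finite collection $\{-t_1^{-1}\log|\mu|\}$, so real parts are automatically bounded.

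The subtle point, and what I expect to be the main obstacle, is bounding the imaginary parts: the preimage of a single point under $\lambda\mapsto e^{-t_1\lambda}$ is an arithmetic progression $\lambda_0 + \frac{2\pi i}{t_1}\mathbb{Z}$, so finiteness of the image does not immediately give finiteness of $\{\lambda : \re\lambda \leq b\}$. To overcome this, I would run the same argument a second time with $t_2>0$ chosen so that $t_1/t_2\notin\mathbb{Q}$. Since the subgroups $\frac{2\pi i}{t_1}\mathbb{Z}$ and $\frac{2\pi i}{t_2}\mathbb{Z}$ of $i\R$ then intersect only at $0$, for each pair $(\mu,\nu)$ in the (finite) joint image the simultaneous system $e^{-t_1\lambda}=\mu$, $e^{-t_2\lambda}=\nu$ has at most one solution. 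Hence $\{\lambda\in\sigma(H):\re\lambda\le b\}$ injects into a finite set, so it is finite, and in particular bounded. (Equivalently, one may invoke \cite[Thm.~4.18]{engel2000one} directly, which packages exactly this statement for immediately compact semigroups applied to $A=-H$ and the half-plane $\{\re\lambda\ge -b\}$.)
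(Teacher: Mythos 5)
Your argument is correct, and it is genuinely more self-contained than what the paper does: the paper disposes of this corollary with a single citation to \cite[Thm.~4.18]{engel2000one}, the characterization of eventually norm-continuous semigroups, one of whose equivalent conditions is exactly that $\{\mu\in\sigma(-H):\re\mu\ge -b\}$ is bounded for every $b$; combined with Corollary~\ref{norcon} this is the whole proof. You instead unpack the mechanism: via the spectral mapping identity $\sigma(e^{-tH})=\{0\}\cup\{e^{-t\lambda}:\lambda\in\sigma(H)\}$ and compactness of $e^{-tH}$, the image of $\{\lambda\in\sigma(H):\re\lambda\le b\}$ under $\lambda\mapsto e^{-t\lambda}$ lies in a finite set, and you resolve the $\frac{2\pi i}{t}\mathbb{Z}$-ambiguity in recovering $\lambda$ from $e^{-t\lambda}$ by using two times $t_1,t_2$ with $t_1/t_2\notin\mathbb{Q}$, making $\lambda\mapsto(e^{-t_1\lambda},e^{-t_2\lambda})$ injective on the set in question. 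This buys you a slightly stronger conclusion, namely that the set is actually \emph{finite}, and avoids quoting a black-box theorem. One small tightening: you do not need Corollary~\ref{norcon} at all for your route. The spectral mapping theorem you want is already the paper's Theorem~\ref{compact}, which holds directly because $e^{-tH}$ is immediately compact (Lemma~\ref{compactness}); invoking norm-continuity and Engel--Nagel Thm.~IV.3.10 is a detour, and citing Theorem~\ref{compact} would make your argument rest only on Lemma~\ref{compactness}.
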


\subsection{Bound on the Pseudospectrum}

Let us first quote several theorems from the theory of one-parameter semigroups of operators which will be needed in the sequel. Throughout this section, $T(t)$ denotes a (strongly continuous) one-parameter semigroup and $-A$ its generator.

\begin{theorem}[{\cite[Ch. 2.2]{DA}}]\la{compact}
Let $T(t)$ be a one-parameter semigroup and assume $T(a)$ is compact for some $a>0$. Then, the spectrum of $A$ consists of a countable discrete set of eigenvalues each of finite multiplicity, and 
\begin{equation}\label{SMT}
 	\sigma(T(t))=\{0\}\cup \{ e^{-t\lambda}\,|\,\lambda\in\sigma(A) \}.
\end{equation}
\end{theorem}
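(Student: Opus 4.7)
The plan is to prove the statement in three stages: first upgrade compactness of $T(a)$ to eventual compactness for all $t\ge a$ and extract the Riesz--Schauder structure, then establish the easy half of the spectral mapping equation, and finally handle the nontrivial half by a reduction to finite dimensions via Riesz spectral projections.

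First I would observe that for any $t\ge a$ one has $T(t)=T(a)T(t-a)$, which is compact as the product of a compact and a bounded operator, so $T(t)$ is compact for all $t\ge a$. By the Riesz--Schauder theorem, the spectrum $\sigma(T(t))$ for such $t$ is a countable set whose only possible accumulation point is $0$, and each nonzero spectral value is an eigenvalue of finite algebraic multiplicity. Since $\sigma(T(t))=\sigma(T(t_0))^{t/t_0}$-type relations are false in general, I will instead extract the discreteness of $\sigma(A)$ from the spectral mapping equality \eqref{SMT} itself, once it is proved.

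For the inclusion $\{e^{-t\lambda}\,|\,\lambda\in\sigma(A)\}\subset\sigma(T(t))$, I would use the identity
\begin{equation*}
  e^{-t\lambda} I - T(t) \;=\; (\lambda + A)\int_0^t e^{-s\lambda}T(s)\,ds \;=\; \int_0^t e^{-s\lambda}T(s)\,ds\,(\lambda + A),
\end{equation*}
which is valid on $\mathcal D(A)$ and extends by closure. If $\lambda\in\sigma(A)$, then $\lambda+A$ fails to be bijective, and the factorisation forces $e^{-t\lambda}\in\sigma(T(t))$. This half is standard and works for an arbitrary $C_0$-semigroup.

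The main obstacle is the reverse inclusion, i.e.\ showing that every nonzero $\mu\in\sigma(T(t))$ has the form $\mu=e^{-t\lambda}$ for some $\lambda\in\sigma(A)$. Here the compactness is essential. Fix such a $\mu\neq 0$; by the first step $\mu$ is an isolated eigenvalue of $T(t)$ of finite algebraic multiplicity, so the Riesz spectral projection
\begin{equation*}
  P_\mu \;=\; \frac{1}{2\pi i}\oint_{|z-\mu|=r} (z-T(t))^{-1}\,dz
\end{equation*}
has finite rank. Since $T(s)$ commutes with $T(t)$ for every $s\ge 0$, it commutes with $P_\mu$, so the finite-dimensional subspace $M:=\ran P_\mu$ is invariant under the whole semigroup. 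The restriction $T(\cdot)|_M$ is a $C_0$-semigroup on a finite-dimensional space, hence of the form $e^{-sA_M}$ for a matrix $A_M$, and $A_M$ is nothing but the restriction of $A$ to $M$ (elements of $M$ automatically lie in $\mathcal D(A)$, since $M\subset\ran T(t)$). The spectrum of $T(t)|_M=e^{-tA_M}$ equals $\{\mu\}$, so by the finite-dimensional spectral mapping theorem each eigenvalue $\lambda$ of $A_M$ satisfies $e^{-t\lambda}=\mu$. Any such $\lambda$ is an eigenvalue of $A$ on an eigenvector in $M$, hence lies in $\sigma(A)$, which produces the required preimage.

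Finally, having \eqref{SMT} in hand, the structural claim on $\sigma(A)$ follows quickly: the map $\lambda\mapsto e^{-t\lambda}$ is a local biholomorphism, so the preimage in $\sigma(A)$ of each isolated eigenvalue $\mu\neq 0$ of $T(t)$ is a discrete set, and the restriction of the Riesz projection of $T(t)$ at $\mu$ to each such $\lambda$ gives the Riesz projection of $A$ at $\lambda$, which has finite rank. Combined with the Riesz--Schauder description of $\sigma(T(t))\setminus\{0\}$, this gives that $\sigma(A)$ is a countable discrete set of eigenvalues of finite algebraic multiplicity, completing the theorem. I expect the delicate point in the write-up to be the justification that elements of $M$ lie in $\mathcal D(A)$ and that $A|_M=A_M$; this uses that $M$ is finite-dimensional and $T(s)$-invariant, so $s\mapsto T(s)\xi$ is smooth on $M$ for every $\xi\in M$.
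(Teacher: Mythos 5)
This theorem is cited from Davies' monograph \cite{DA} rather than proved in the paper, so I will assess your sketch on its own terms. The overall architecture --- Riesz--Schauder theory for $T(t)$, the integrated factorisation for the spectral inclusion, and reduction to the finite-dimensional invariant subspace $M=\ran P_\mu$ for the reverse inclusion --- is the standard textbook route, and the remark at the end correctly identifies why $M\subset\mathcal D(A)$ and $A\vert_M=A_M$ (namely by finite-dimensionality and $T(s)$-invariance of $M$, not because $\ran T(t)\subset\mathcal D(A)$, which is false for general eventually compact semigroups).

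There are two real gaps. First, the reverse inclusion as written covers only $t\ge a$: you assert that ``by the first step $\mu$ is an isolated eigenvalue of $T(t)$ of finite algebraic multiplicity,'' but the Riesz--Schauder structure was established only for $t\ge a$, and for $0<t<a$ the operator $T(t)$ need not be compact (take the nilpotent left shift on $L^2[0,1]$, where $T(1)=0$ but no $T(t)$ with $t<1$ is compact). To cover small $t$ you must either pass to $T(nt)$ with $nt\ge a$, form the finite-rank Riesz projection of $T(nt)$ at $\mu^n$, and show $\mu\in\sigma\!\left(T(t)\vert_{\ran P_{\mu^n}}\right)$, or invoke eventual norm-continuity of eventually compact semigroups together with the spectral mapping theorem for that class. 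Second, the final paragraph infers that every $\lambda\in\sigma(A)$ is an eigenvalue of finite multiplicity from an asserted identification of the Riesz projection of $A$ at $\lambda$ with a subprojection of $P_\mu$; but your reverse inclusion only produces, for each $\mu$, \emph{some} eigenvalue of $A$ mapping onto $\mu$, not that an arbitrary preassigned $\lambda_0\in\sigma(A)$ with $e^{-t\lambda_0}=\mu$ is itself an eigenvalue. One still needs the lemma that $\lambda_0\notin\sigma\!\left(A\vert_{\ker P_\mu}\right)$, e.g.\ by applying the spectral inclusion to the restricted semigroup on $\ker P_\mu$, whose spectrum at time $t$ avoids $\mu$. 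Finally, a sign slip: with $-A$ generating $T(t)$, the correct identity on $\mathcal D(A)$ is $e^{-t\lambda}I - T(t) = \int_0^t e^{-\lambda(t-s)}T(s)\,ds\,(A-\lambda)$; your operator factor $(\lambda+A)$ and weight $e^{-s\lambda}$ are both off, though the logic survives once corrected.
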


\begin{theorem}[{\cite[Ch. 1.2]{DA}}]\la{radius}
	If $T(t)$ is a one-parameter semigroup on a Banach space then
	\begin{equation}
 		a:=\lim_{t\to\infty} t^{-1}\log\|T(t)\|
	\end{equation}
	exists with $-\infty\leq a<\infty$. Moreover
	\begin{equation}
 		r(T(t)):=\max\{|\lambda|:\lambda\in\sigma(T(t))\}=e^{at}\quad\forall t>0,
	\end{equation}
\end{theorem}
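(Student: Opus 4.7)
The plan is to combine the subadditivity of $t\mapsto \log\|T(t)\|$ with Gelfand's spectral radius formula applied to the bounded operator $T(t)$.

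First I would establish that $t\mapsto \|T(t)\|$ is locally bounded. This is a standard consequence of strong continuity: strong continuity at $0$ together with the uniform boundedness principle yields $\delta>0$ and $M\ge 1$ such that $\|T(t)\|\le M$ for all $t\in[0,\delta]$. Iterating the semigroup property $T(t+s)=T(t)T(s)$ then gives $\|T(t)\|\le Me^{bt}$ for some $b\in\R$ and all $t\ge 0$. In particular the function $f(t):=\log\|T(t)\|$ is well-defined on $(0,\infty)$ (with values in $[-\infty,\infty)$), measurable, and bounded above by an affine function.

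Next I would show $f$ is subadditive. From $T(t+s)=T(t)T(s)$ and submultiplicativity of the operator norm,
\begin{equation*}
 	\|T(t+s)\|\le \|T(t)\|\cdot\|T(s)\|,\qquad t,s>0,
\end{equation*}
so $f(t+s)\le f(t)+f(s)$. The classical Fekete lemma for measurable subadditive functions then implies that
\begin{equation*}
 	a\;:=\;\lim_{t\to\infty}\tfrac{1}{t}f(t)\;=\;\inf_{t>0}\tfrac{1}{t}f(t)
\end{equation*}
exists in $[-\infty,\infty)$. The upper bound $f(t)\le \log M + bt$ from the previous step ensures $a\le b<\infty$; the lower bound $-\infty$ is genuinely allowed, as occurs for nilpotent semigroups.

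For the spectral radius identity, I would apply Gelfand's formula to the bounded operator $T(t)$ and use the semigroup law $T(t)^n=T(nt)$:
\begin{equation*}
 	r\bigl(T(t)\bigr)=\lim_{n\to\infty}\|T(t)^n\|^{1/n}=\lim_{n\to\infty}\|T(nt)\|^{1/n}.
\end{equation*}
Taking logarithms and inserting a factor $t$ in the denominator,
\begin{equation*}
 	\log r\bigl(T(t)\bigr)=\lim_{n\to\infty}\tfrac{1}{n}f(nt)=t\cdot\lim_{n\to\infty}\tfrac{1}{nt}f(nt)=ta,
\end{equation*}
where the last equality uses that $(nt)_{n\in\N}$ is a subsequence of $s\to\infty$ and the full limit $a$ already exists. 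Exponentiating gives $r(T(t))=e^{at}$ for every $t>0$.

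The main obstacle is the local boundedness step: without it, one cannot bound $a$ from above nor justify passing to $\log$, and Fekete's lemma would be inapplicable in any useful quantitative form. Once that ingredient is in place via strong continuity plus Banach--Steinhaus, the rest is essentially bookkeeping combining subadditivity and Gelfand's formula.
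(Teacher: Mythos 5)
Your proof is correct and is the standard textbook argument for this result; the paper itself gives no proof, simply citing [Ch.~1.2]{DA}, where essentially the same chain (local boundedness via Banach--Steinhaus, subadditivity of $t\mapsto\log\|T(t)\|$, Fekete's lemma, and Gelfand's spectral radius formula combined with $T(t)^n=T(nt)$) is used. One small point worth spelling out: Fekete's lemma for real-variable subadditive functions needs a mild regularity hypothesis, and the relevant one here is that $t\mapsto\|T(t)\|$ is lower semicontinuous (as a supremum of the continuous functions $t\mapsto\|T(t)x\|$ over the unit ball), hence measurable; together with the upper affine bound you derived, this makes the lemma applicable. Your remark that the full limit $a$ already exists is exactly what justifies replacing $\lim_{n}f(nt)/(nt)$ by $a$ in the final step, and the case $a=-\infty$ (quasinilpotent $T(t)$, $r(T(t))=0$) is handled consistently.
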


\begin{theorem}[{\cite[Ch. VII.4]{DW}}]\la{werner}
	Let $T(t)$ be a one-parameter semigroup with $\|T(t)\|\leq Me^{at}$ for all $t\geq 0$. Then
	\begin{equation}
 		\|(z-A)^{-1}\| \leq \f{M}{a-\re z} \qquad\forall z:\re z < a.
	\end{equation}
\end{theorem}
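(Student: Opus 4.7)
This is the classical Hille--Yosida resolvent bound; my plan is to realise $(z-A)^{-1}$ as, up to sign, a Laplace transform of $T(t)$ and read off the norm estimate directly from absolute convergence of the integral.

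Concretely, since $-A$ generates $T(t)$, I would introduce the Bochner integral
\[
 R(z)\phi \;:=\; \int_0^\infty e^{zt}\,T(t)\phi\,dt,\qquad \phi\in\mathcal{H}.
\]
The semigroup bound $\|T(t)\|\leq Me^{at}$, combined with the paper's convention that $-A$ rather than $A$ is the generator, makes the integrand exponentially small in norm precisely on the half-plane $\{\re z<a\}$. This gives absolute convergence of the integral, together with the immediate estimate $\|R(z)\|\leq M/(a-\re z)$, obtained by pulling the norm inside the integral and computing the resulting elementary exponential integral.

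To identify $-R(z)$ with $(z-A)^{-1}$ I would use closedness of $A$ to pull $A$ under the integral sign; combined with the core identity $\tfrac{d}{dt}T(t)\phi=-AT(t)\phi$ on $\mathcal{D}(A)$, the integrand then collapses to a total $t$-derivative, so integration from $0$ to $\infty$ telescopes. The boundary term at infinity vanishes by the exponential decay just established, while the boundary term at $t=0$ produces $\phi$. This gives $(z-A)R(z)\phi=-\phi$ for $\phi\in\mathcal{D}(A)$; a symmetric computation yields $R(z)(z-A)\phi=-\phi$; and density of $\mathcal{D}(A)$ together with closedness of $A$ extends both identities to all of $\mathcal{H}$, putting $z$ in $\rho(A)$ with $(z-A)^{-1}=-R(z)$ and therefore $\|(z-A)^{-1}\|\leq M/(a-\re z)$.

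The only genuinely delicate point is the interchange of $A$ with the Bochner integral, which has to be justified via closedness of $A$ together with the vanishing of the boundary term at infinity; both are routine once the exponential decay of the integrand is in hand. Everything else is absolute convergence plus the triangle inequality, so no spectral-theoretic input beyond the stated semigroup hypothesis is needed.
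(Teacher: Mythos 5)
The paper does not prove this statement; it is quoted from \cite{DW}, and your Laplace-transform realisation of the resolvent is precisely the standard argument one would find there, so the overall route is the right one. The one place you should look again is the bookkeeping on the convergence region. With the hypothesis as printed, $\|T(t)\|\leq Me^{at}$, one gets $\|e^{zt}T(t)\|\leq Me^{(\re z + a)t}$, which is integrable over $(0,\infty)$ iff $\re z < -a$, \emph{not} $\re z < a$; and the explanation you offer --- that the convention ``$-A$ rather than $A$ is the generator'' produces decay on $\{\re z < a\}$ --- is not correct, since the sign convention for the generator has no bearing on the size of $e^{zt}T(t)$ (it only enters when you identify $R(z)$ with $\pm(z-A)^{-1}$). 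What actually reconciles the hypothesis with the asserted conclusion is a sign slip in the transcription of the theorem: the hypothesis should read $\|T(t)\|\leq Me^{-at}$. This is confirmed by the way the paper applies the result, namely from the bound $\|e^{-tH}\|\leq M_\alpha e^{-\alpha t}$ to $\|(z-H)^{-1}\|\leq M_\alpha/(\alpha-\re z)$ on $\{\re z<\alpha\}$. With that correction your argument goes through cleanly: $\|e^{zt}T(t)\|\leq Me^{(\re z - a)t}$ decays on $\{\re z<a\}$, so $R(z)=\int_0^\infty e^{zt}T(t)\,dt$ converges with $\|R(z)\|\leq M/(a-\re z)$; differentiating $e^{zt}T(t)\phi$ and integrating by parts on $\mathcal D(A)$ gives $(z-A)R(z)\phi=-\phi=R(z)(z-A)\phi$; and closedness of $A$ (not merely density of $\mathcal D(A)$) upgrades the first identity to all of the Hilbert space, yielding $(z-A)^{-1}=-R(z)$.
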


\paragraph{Example: The imaginary Airy Operator.}

The theorems in the previous section can be used to estimate the pseudospectra of m-accretive operators. As an illustrative example, let us treat the imaginary Airy operator defined as
\begin{equation}
 	H_{Ai}=-\f{d^2}{dx^2}+ix\quad\text{ on }\quad \{\phi\in L^2(\R)\,|\,-\phi''+ix\phi\in L^2(\R)\}.
\end{equation}
This operator is m-accretive, and thus generates a one-parameter semigroup. Using the Fourier transform one can show that 
\cite{DA2}

\begin{equation}
 	\|e^{-tH_{Ai}}\|=e^{-\f{t^3}{12}}.
\end{equation}
Let now $a>0$. Choosing $M_a = \sup_{t\geq 0} \big(e^{at-\f{t^3}{12}}\big)$ we have
$$ e^{-\f{t^3}{12}}\leq M_ae^{-at}$$
and so
\begin{equation}
 	\|e^{-tH_{Ai}}\|\leq M_ae^{-at}.
\end{equation}
Thus Theorem \ref{werner} tells us that 
\begin{equation}
 	\|(z-H_{Ai})^{-1}\| \leq \f{M_a}{a-\re z}\qquad\forall z:\re z< a.
\end{equation}
(note that the generator of the semigroup is not $H_{Ai}$ but $-H_{Ai}$). In particular, we have for (say) $\re z<a-1$ that
\begin{equation}\label{Haiestimate}
 	\|(z-H_{Ai})^{-1}\| \leq M_a.
\end{equation}
This shows that for $\varepsilon<\f{1}{M_a}$ the set $\{z\,|\,\re z<a-1\}$ does not intersect the $\varepsilon$-pseudospectrum. In more suggestive terms: The $\varepsilon$-pseudospectrum wanders off to the right as we decrease $\varepsilon$.

A simple calculation shows that $M_a=\sup_{t\geq 0} \big(e^{at-\f{t^3}{12}}\big)=e^{\f{4}{3}a^{3/2}}$. This even enables us to estimate how fast the pseudospectrum moves with decreasing $\eps$. To this end, let $z\in\sigma_\eps(H_{Ai})$ for some fixed $\eps>0$. Then by \eqref{Haiestimate} we have 
\begin{align*}
 	\f 1 \eps &\leq \|(z-H_{Ai})^{-1}\| \\
 	&\leq  e^{\f{4}{3}(\re z+1)^{3/2}}  \\
 	&\leq e^{w(\re z)^{3/2}} 
\end{align*}
for some $w>0$ and $\re z$ large enough. This inequality immediately leads to 
\begin{equation}\label{scaling}
 	\re z \geq w^{-1} \left( \log\f 1 \eps\right)^{2/3},
\end{equation}
with $w$ independent of $\eps$. This shows that indeed every point in the $\eps$-pseudospectrum moves towards $+\infty$ with velocity at least $\left( \log\f 1 \eps\right)^{2/3}$.

Let us compare this to the results of \cite{SK}. Using semiclassical techniques the authors showed that there exist constants $C_1,C_2>0$ such that for all $\eps>0$
\begin{equation*}
 	\sigma_\eps(H_{Ai})\supset \left\{ z : \re(z)\geq C_1,\,\re(z)\geq C_2\Big(\log\f 1\eps\Big)^{2/3} \right\}.
\end{equation*}
Equation \eqref{scaling} confirms that the scaling found in \cite{SK} is in fact optimal.
 The same result has previously been obtained in \cite{Willi} using a different method of proof.
 
 Note that together with the observation that $\|(H_{Ai}-z)^{-1}\|$ is independent of $\operatorname{Im}(z)$ (see \cite[Problem 9.1.10]{DA2}) the pseudospectrum of $H_{Ai}$ is (essentially) completely characterized: it consists of half-planes moving towards $+\infty$ with asymptotic velocity $\big(\log\f 1\eps\big)^{2/3}$.

\paragraph{The General Case: A First Estimate.}

Let us now turn back to the operator $H =-\Delta+V $ of Definition \ref{Hdef}. To conclude the proof of Theorem \ref{mainth} we will need several lemmas which will be established next. By Theorem \ref{compact} we know that
\begin{equation}\label{spm}
 	\sigma(e^{-tH })=\{0\}\cup \{ e^{-t\lambda}\,|\,\lambda\in\sigma(H ) \}.
\end{equation}
%Let us assume w.l.o.g. that $\inf\{\re\lambda : \lambda\in\sigma(H)\}>0$. If this is not the case, 
%let $k$ be the number of eigenvalues with zero real part. By Corollary \ref{specbound} we know that $k<\infty$. Now, we can project out the eigenspaces associated with these $k$ eigenvalues. This only results in a finite-dimensional error which is easily controlled (see Lemma \ref{BoultonLemma} below). Let $\lambda_0$ denote an eigenvalue of $H$ with $\re\lambda_0=\min\{\re\lambda : \lambda\in\sigma(H),\re\lambda>0\}$. 
Let us denote the eigenvalues of $H$ by $\lambda_j$ such that $\re\lambda_j \leq \re\lambda_i$ for $j \leq i$. Thus, $\lambda_0$ denotes an eigenvalue with minimal real part. In fact, up to now we could have $\re\lambda_0=-d$. We will account for this problem below in Lemma \ref{BoultonLemma}. With this notation, we obtain from eq. \eqref{spm} that
\begin{equation}
 	r(e^{-tH })=e^{-t\re\lambda_0},
\end{equation}
Thus by Theorem \ref{radius} we have 
\begin{equation}
 	-\re\lambda_0=\lim_{t\to\infty}t^{-1}\log\|e^{-tH }\|.
\end{equation}
In other words, we have that for every $\alpha<\re\lambda_0$ that
\begin{equation}
 	\lim_{t\to\infty}e^{\alpha t}\|e^{-tH }\|=0.
\end{equation}
Let such an $\alpha<\re\lambda_0$ be fixed and choose $t_\alpha$ such that $e^{\alpha t}\|e^{-tH }\|<1$ for all $t>t_\alpha$. On the whole we have
\begin{alignat*}{2}
 	\|e^{-tH }\|&<e^{-\alpha t} &&\quad \forall t>t_\alpha\\
 	\|e^{-tH }\|&\leq 1 && \quad\forall t>0 \quad\text{(since it is a contraction semigroup)},
\end{alignat*}
so we finally arrive at 
\begin{equation}
 	\|e^{-tH }\|\leq M_\alpha e^{-\alpha t}\qquad\forall t>0,
\end{equation}
with $M_\alpha=e^{\alpha t_\alpha}$.

We are now in the position to proceed as for the imaginary Airy operator. Theorem \ref{werner} tells us that 
\begin{equation}\label{resab}
 	\|(z-H )^{-1}\| \leq \f{M_\alpha}{\alpha-\re z}\qquad\forall z:\re z< \alpha.
\end{equation}
Note, however, that this time we cannot simply let $\alpha\to +\infty$ since we are restricted to $\alpha<\re\lambda_0$.

\paragraph{Pushing the Pseudospectrum Towards Infinity.}
Let $Q_n=\f{1}{2\pi i}\oint_\gamma(H -z)^{-1} dz$ denote the Riesz projection associated with $H $, where $\gamma$ encloses only the $n$-th eigenvalue $\lambda_n$ (which is possible since the spectrum of $H $ is discrete). Moreover, define $P_m:=\sum_{n=0}^m Q_n$. Then each of the operators $Q_n,P_m$ commutes with the resolvent of $H$.

Since $H$ has compact resolvent, we have that $\dim(\ran Q_n)<\infty\quad\forall n$. For each $m\in\mathbb N$ the space $ L^2(\R^n)$ decomposes into a direct sum of closed, $H $-invariant subspaces\footnote{
	$H $-invariance follows from the fact that the $Q_n$ commute with $H $ and closedness of $\Ran(I-P_m)$ follows from the Fredholm alternative.
}
\begin{equation}\la{directsum}
	 L^2(\R^n)=\Ran Q_0\oplus\cdots\oplus \Ran Q_m\oplus\Ran(I-P_m)
\end{equation}
Because $e^{-tH}$ commutes with the resolvent of $H$, each of the above subspaces is invariant under $e^{-tH }$ and hence, by \cite[Sect. II.2.3]{engel2000one} the generator of $e^{-tH }\vert_{\Ran Q_n}$ is $-H \vert_{\Ran Q_n}$. The same is true for $\Ran(I-P_m)$. 

Since the spectrum of $H |_{\Ran(I-P_m)}$ is $\{\lambda_n : n>m\}$ (and since the restriction of a compact operator is compact), applying Theorem \ref{compact} again gives
\begin{equation}\label{verschsp}
 	\sigma\big(e^{-tH }\big\vert_{\Ran(I-P_m)}\big)=\{0\}\cup \{e^{-t\lambda_n}\}_{n=m+1}^\infty.
\end{equation}

\begin{lemma}\label{BoultonLemma}
For all $z\in\rho(H )$, one has
\begin{equation}\label{normineq}
 	\|(H -z)^{-1}\|\leq C\left( \sum_{n=0}^m\|(H |_{\Ran Q_n}-z)^{-1}\| + \| (H |_{\Ran(I-P_m)}-z)^{-1} \| \right)
\end{equation}
where $C$ depends only on $\|Q_n\|$ ($n\leq m$).
\end{lemma}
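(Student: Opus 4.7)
The plan is to exploit the direct-sum decomposition \eqref{directsum} together with the fact that each summand is invariant under the resolvent $(H-z)^{-1}$, so that the resolvent splits as a block diagonal operator and the triangle inequality yields the desired bound.

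More concretely, I would proceed as follows. Since each $Q_n$ and $P_m$ commutes with $(H-z)^{-1}$ (they are Riesz projections), the subspaces $\Ran Q_n$ and $\Ran (I-P_m)$ are invariant not only under $H$ but also under $(H-z)^{-1}$. On each $H$-invariant closed subspace $Y$ with $z\in\rho(H)$, the restriction of the global resolvent coincides with the resolvent of the restricted operator, i.e.\ $(H-z)^{-1}\big|_{Y}=(H|_Y-z)^{-1}$. I would state this explicitly and apply it to $Y=\Ran Q_n$ (for $n=0,\dots,m$) and $Y=\Ran(I-P_m)$.

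Given any $f\in L^2(\R^n)$, decompose
\begin{equation*}
  f = \sum_{n=0}^m Q_n f + (I-P_m)f,
\end{equation*}
apply $(H-z)^{-1}$ termwise using the invariance above, and take norms:
\begin{equation*}
  \|(H-z)^{-1}f\|\le \sum_{n=0}^m \|(H|_{\Ran Q_n}-z)^{-1}\|\,\|Q_n f\| + \|(H|_{\Ran(I-P_m)}-z)^{-1}\|\,\|(I-P_m)f\|.
\end{equation*}
Since the projections $Q_n$ are bounded and $\|I-P_m\|\le 1+\sum_{n=0}^m\|Q_n\|$, each factor $\|Q_n f\|$ and $\|(I-P_m)f\|$ is bounded by a constant multiple of $\|f\|$, with a constant depending only on $\|Q_0\|,\dots,\|Q_m\|$. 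Setting
\begin{equation*}
  C := \max\Bigl\{\,\max_{0\le n\le m}\|Q_n\|,\ 1+\textstyle\sum_{n=0}^m\|Q_n\|\,\Bigr\}
\end{equation*}
and dividing by $\|f\|$ before taking the supremum yields \eqref{normineq}.

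There is no real obstacle here, since the Riesz projections associated with $H$ are bounded, commute with $H$ and with the resolvent, and the decomposition \eqref{directsum} has already been established. The only point worth stating carefully is the identification $(H-z)^{-1}|_Y=(H|_Y-z)^{-1}$ on each invariant subspace, which is a direct consequence of the commutation of the projections with the resolvent (equivalently, of \cite[Sect.~II.2.3]{engel2000one}, already cited in the paper). After that, the inequality reduces to the triangle inequality and the operator-norm bounds on the projections.
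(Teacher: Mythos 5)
Your proof is correct and follows essentially the same strategy as the paper's: decompose $L^2(\R^n)$ via \eqref{directsum}, use that the Riesz projections commute with the resolvent so that each summand is invariant and $(H-z)^{-1}$ restricts to $(H|_Y - z)^{-1}$, then apply the triangle inequality and absorb $\|Q_n\|$ and $\|I-P_m\|$ into the constant. The only cosmetic difference is that the paper works from the range side (decomposing $\psi=(H-z)\xi$ and verifying a surjectivity claim for the restricted operators), whereas you work from the domain side and invoke the commutation with the resolvent directly; your route renders the paper's explicit Claim redundant but is mathematically the same argument.
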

\begin{proof}
	Let $z\in\rho(H )$ and $\xi,\psi\in L^2(\R^n)$ such that $(H -z)\xi=\psi$ and $\|\psi\|=1$. We want to estimate $\|\xi\|$. To do this, note that by surjectivity of $(H -z)$ we have
	\begin{equation}\label{L2H}
 		L^2(\mathbb R^n) =  \left(\bigoplus_{n=0}^m  \Ran(H  |_{\Ran( Q_n)} -z)\right)  +  \Ran(H  |_{\Ran (I-P_m)} -z).
	\end{equation}
	Note that the first term on the right hand side is actually equal to $\bigoplus_{n=0}^m\Ran Q_n$, since $\ran Q_n$ is $H$-invariant.
	\begin{itemize}
	\item[\emph{Claim:}] We have $\Ran(I-P_m)= \Ran(H  |_{\Ran(I-P_m)} -z)$.
	\item[\emph{Proof of Claim:}] Since the $Q_n$ commute with $H $, we have 
	\begin{align*}
		\Ran(H  |_{\Ran(I-P_m)} -z)&=\Ran\big((H  |_{\Ran(I-P_m)} -z)(I-P_m)\big)\\
		&=\Ran\big((I-P_m)(H  |_{\Ran(I-P_m)} -z)\big)\\
		&\subset  \Ran(I-P_m).
	\end{align*}
	Now, suppose there was a $0\neq\phi\in\Ran(I-P_m)\backslash \Ran(H  |_{\Ran(I-P_m)} -z)$. Since \eqref{directsum} is a direct sum $\phi$ cannot have any components in $\bigoplus_{n=0}^m\Ran Q_n$. But then $\phi\notin \Ran(H -z)$, by \eqref{L2H}, which contradicts surjectivity.
	\end{itemize}	
	Now, decompose 
	\begin{align*}
		\psi&=\sum_{n=1}^mQ_n\psi+(I-P_m)\psi \\
		&=:\sum_{n=1}^m\psi_n+\tilde\psi.
	\end{align*}
	  Choose $\xi_n\in\Ran Q_n$ such that $(H -z)\xi_n=\psi_n$ and $\tilde\xi\in\Ran(I-P_m)$ such that $(H -z)\tilde\xi=\tilde\psi$ (which is possible since $\Ran(I-P_m)= \Ran(H  |_{\Ran(I-P_m)} -z)$). But now it is clear that
	  \begin{alignat*}{2}
		\|\xi_n\| &\leq \|(H |_{\Ran Q_n}-z)^{-1}\|\|\psi_n\| &&\leq \|(H |_{\Ran Q_n}-z)^{-1}\|\|Q_n\|\|\psi\| \\
		\|\tilde\xi\| &\leq \|(H |_{\Ran(I-P_m)}-z)^{-1}\|\|\tilde\psi\| &&\leq \|(H |_{\Ran(I-P_m)}-z)^{-1}\|\|(I-P_m)\|\|\psi\| 
	\end{alignat*}
	Finally, using the triangle inequality we obtain
	\begin{align*}
 		\|\xi\|& \leq \sum_{n=1}^m\|\xi_n\| + \|\tilde\xi\| \\
 		&\leq     \left( \sum_{n=0}^m\|Q_n\|\|(H |_{\Ran Q_n}-z)^{-1}\|+ \|(I-P_m)\|\|(H |_{\Ran(I-P_m)}-z)^{-1}\|\right)\!\|\psi\| \\
 		&\leq \bigg(1+\sum_{n=0}^m\|Q_n\|\bigg)\Big(\|(H |_{\Ran Q_n}-z)^{-1}\|+\|(H |_{\Ran(I-P_m)}-z)^{-1}\|   \Big)
	\end{align*}
	which concludes the proof.
\end{proof}
We are finally able to complete the proof of Theorem \ref{mainth}. In \eqref{normineq} the first term on the right hand side is nothing but a sum of the resolvents of matrices. These are well-known to decay in norm at infinity. In fact, a simple calculation shows that one has $\|(T-\lambda)^{-1}\|\leq \big(|\lambda|-\|T\|\big)^{-1}$ as $|\lambda|\to\infty$. As a consequence, the $\eps$-pseudospectra of $(H|_{\Ran Q_n}-z)^{-1}$ are contained in discs around the $\lambda_n$ for $\eps$ small enough.

For the second term we can use \eqref{verschsp} in Theorems \ref{radius} and \ref{werner} to obtain an estimate similar to \eqref{resab}, but with $\alpha<\re\lambda_{m+1}$ instead. By Corollary \ref{specbound} we necessarily have $\re\lambda_n\to\infty$ as $n\to\infty$. Thus we obtain a bound on $\|(H-\lambda)^{-1}\|$ on vertical lines with arbitrarily large real part and the proof of Theorem \ref{mainth} is completed.

\section{The cases of vanishing and negative real part of the potential}\label{examples}

It is natural to ask whether the condition $\re V(x)\ge c|x|^2{\color{red}{-d}}$ can be relaxed. In this section we will discuss two examples giving hints as to what might or might not be possible. First, we will consider an example of a Schr\"odinger operator with $\re V=0$ which still satisfies the inclusion \eqref{maininc}. Second, we will show that in the case $\re V(x)\le -c|x|^2$ one can not expect any inclusion of the form \eqref{maininc}.

\subsection{Example: The Imaginary Cubic Oscillator}

In this section we consider the operator 
\begin{equation}
 	H_B=-\f{d^2}{dx^2} + ix^3 \quad\text{ on }\quad \LR,
\end{equation}
defined in the sense of Proposition \ref{BSTprop}. $H_B$ is sometimes called the imaginary cubic oscillator, or the Bender oscillator. We immediately obtain closedness of $H_B$, compactness of its resolvent and m-accrevity from Proposition \ref{BSTprop}. Moreover, it is known \cite{DDT,Shin} that the spectrum of $H_B$ is entirely real and positive which enables us to number the eigenvalues $\lambda_i$ of $H_B$ such that $\lambda_i\leq\lambda_j$ for $i\le j$ and $\lambda_0>0$.
In this section, we will prove the following result about $H_B$. 

\begin{theorem}\label{cubicmainth}
	For the pseudospectrum of $H_B$ the inclusion \eqref{maininc} holds and in
addition there exists a $C>0$ such that for every $\delta>0$ there is
an $\eps>0$ such that 
	\begin{equation}\label{mainscaling}
		\sigma_\eps(H_B)\subset  \left\{z:\re z \geq C\left(\log\f{1}{\eps}\right)^{6/5}\right\}\cup\bigcup_{\lambda\in\sigma(H_B)}\{z:|z-\lambda|<\delta \}.
	\end{equation}
	In particular, apart from disks around the eigenvalues, the $\eps$-pseudospectrum is contained in the half plane $\left\{\re z \geq C\left(\log\f{1}{\eps}\right)^{6/5}\right\}$.
\end{theorem}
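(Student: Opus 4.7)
The plan is to follow the approach of Theorem \ref{mainth}, but two distinct issues arise. Since $V = ix^3$ has $\re V \equiv 0$, Lemma \ref{compactness} does not directly apply, so the immediate compactness of $e^{-tH_B}$ must be re-established. Moreover, to obtain the quantitative $(\log 1/\eps)^{6/5}$ scaling in \eqref{mainscaling}, the abstract constants appearing in Lemma \ref{BoultonLemma} must be tracked as functions of the truncation level $m$ and combined with the classical WKB asymptotic $\lambda_m \sim c m^{6/5}$ for the cubic oscillator.

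For immediate compactness of $e^{-tH_B}$, the dominating inequality of Lemma \ref{U<SF} holds trivially with $\kappa \equiv 0$, but the cut-off argument of Lemma \ref{RntoT} breaks down because the heat kernel lacks the product decay used there. I would retain the local parabolic regularity argument from the second half of the proof of Lemma \ref{compactness}, which shows that the localized operator $T_n(\tau) = \theta_n e^{-\tau H_B/2}$ is compact via Rellich--Kondrachov, and combine it with a separate argument showing $\|(1-\theta_n)e^{-\tau H_B/2}\| \to 0$ via weighted $L^2$-estimates that exploit the oscillatory decay induced by the imaginary cubic potential. With immediate compactness in hand, the first inclusion \eqref{maininc} follows by running the proof of Theorem \ref{mainth} verbatim.

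For the quantitative scaling I would apply Lemma \ref{BoultonLemma}:
\begin{equation*}
 \|(H_B - z)^{-1}\| \leq C_m\Big(\sum_{n=0}^m \|(H_B|_{\Ran Q_n} - z)^{-1}\| + \|(H_B|_{\Ran(I-P_m)} - z)^{-1}\|\Big),
\end{equation*}
where $C_m$ depends on $\|Q_n\|$ for $n\le m$. The finite-dimensional summands are uniformly bounded outside disks of radius $\delta$ around each $\lambda_n$. On $\Ran(I-P_m)$, Theorems \ref{compact}, \ref{radius} and \ref{werner} yield $\|(H_B|_{\Ran(I-P_m)} - z)^{-1}\| \leq M_m/(\alpha_m - \re z)$ for any $\alpha_m < \lambda_{m+1} \sim c m^{6/5}$. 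Provided $C_m$ and $M_m$ are controlled by $e^{bm}$, balancing $e^{bm}\leq 1/\eps$ gives $m \geq c_1\log(1/\eps)$, and hence $\re z \geq \alpha_m - 1 \geq c_2(\log(1/\eps))^{6/5}$ for any $z\in\sigma_\eps(H_B)$ outside the eigenvalue disks, proving \eqref{mainscaling}.

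The main obstacle is quantifying the growth of $C_m$ and $M_m$ in $m$. By Theorem \ref{novak} the Riesz projections $Q_n$ are not uniformly bounded, so making this unboundedness quantitative---establishing an exponential (rather than super-exponential) growth bound---is the delicate ingredient, and likely requires specific WKB or pseudomode-type estimates for the Bender oscillator rather than abstract semigroup considerations.
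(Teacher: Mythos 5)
Your overall strategy for the quantitative scaling---decompose via Lemma \ref{BoultonLemma}, track the $m$-dependence of the constants, and balance against the WKB asymptotics $\lambda_m\sim m^{6/5}$---is exactly the paper's strategy, and you correctly identify that the growth of $\|Q_n\|$ is the crux of the matter. The exponential bound you say ``likely requires specific WKB or pseudomode-type estimates'' is in fact available: the paper cites \cite{Henry} for $\lim_{k\to\infty}\frac{\log\|Q_k\|}{k}=\frac{\pi}{\sqrt 3}$, hence $\|Q_k\|\le Ce^{\mu k}$ for any $\mu>\pi/\sqrt 3$, which is precisely what you need.

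However, your route to the restricted-resolvent estimate has two gaps that the paper avoids. First, you propose to re-establish immediate compactness of $e^{-tH_B}$ and then run Theorem \ref{mainth} verbatim, but the sketched argument for compactness---``weighted $L^2$-estimates that exploit the oscillatory decay induced by the imaginary cubic potential''---is speculative: the imaginary potential does not by itself produce pointwise spatial decay of the kernel, and no such estimate is supplied. The paper sidesteps this entirely. Because the eigenfunctions of $H_B$ form a complete set with one-dimensional algebraic eigenspaces, it invokes Davies' lemma ($\|T(t)\|=\lim_n\|T(t)|_{\mathrm{span}\{\psi_1,\dots,\psi_n\}}\|$) and then computes directly
\begin{equation*}
\big\|e^{-tH_B}\big|_{\Ran(I-P_{m-1})}\big\| \le \sum_{k=m}^{\infty}e^{-t\lambda_k}\|Q_k\| \le M e^{-\omega m^{6/5}t},
\end{equation*}
with $M,\omega$ \emph{uniform in $m$}. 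Second, and related, your proposed appeal to Theorems \ref{compact}, \ref{radius}, \ref{werner} on $\Ran(I-P_m)$ would give some constant $M_m$ in $\|(H_B|_{\Ran(I-P_m)}-z)^{-1}\|\le M_m/(\alpha_m-\re z)$, but those abstract theorems give no control of $M_m$ as $m\to\infty$; the balancing step $e^{bm}\le 1/\eps$ then has no teeth. It is precisely the eigenfunction-expansion bound above that makes $M$ uniform and lets the $\|Q_k\|$-growth (through $C_m$) carry the entire $e^{bm}$ factor. So the compactness detour is unnecessary, and the constants you leave undetermined are determined only by the completeness-plus-Davies argument combined with the projection asymptotics from \cite{Henry}; without those ingredients the scaling \eqref{mainscaling} does not follow.
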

\begin{proof}
As in the previous section we want to estimate $\|e^{-tH_B}|_{\Ran(I-P_m)}\|$ for $m\in\mathbb N$. We know that the eigenfunctions of $H_B$ form a complete set in $\LR$ and the algebraic eigenspaces are one-dimensional \cite{SK2,DucTai2006351}.
Thus, we can use Lemma 3.1 of \cite{Davies}:
\begin{lemma}[\cite{Davies}]
	Let $T(t)$ be a strongly continuous semigroup and $\{\psi_n\}_{n=1}^\infty$  a complete set of linearly independent vectors. Let $T_n(t)$ denote the restriction of $T(t)$ to $\mathrm{span}\{\psi_1,\dots,\psi_n\}$. Then
	\begin{equation}
		\|T(t)\|=\lim_{n\to\infty}\|T_n(t)\|
	\end{equation}
	for all $t\geq 0$.
\end{lemma}
From now on, let $\{\psi_n\}_{n=1}^\infty$ denote the set of eigenvectors of $H_B$ and let $V_m^n$ $:=$ $\operatorname{span}\{\psi_m,$ $\dots,\psi_n\}$ = $\bigoplus_{k=m}^{n}\Ran(Q_k)$. The Lemma now implies
\begin{align*}
 	\|e^{-tH_B}|_{\Ran(I-P_{m-1})}\| = \lim_{n\to\infty}\|e^{-tH_B}|_{V_m^n}\|.
\end{align*}
The analytic functional calculus (see \cite[Ch.V.]{taylor1980introduction}) shows that $\sum_{k=m}^nQ_k$ is a projection again and thus we have $\psi=\sum_{i=m}^{n}Q_i\psi$ for every $\psi\in V_m^n$ which we can use as follows.
\begin{align*}
 	\|e^{-tH_B}|_{\Ran(I-P_m)}\psi\| &= \lim_{n\to\infty}\|e^{-tH_B}|_{V_m^n}\psi\| \\
 	&= \lim_{n\to\infty}\left\|\sum_{k=m}^{n}e^{-t\lambda_k}Q_k\psi\right\| \\
 	&\leq \lim_{n\to\infty}\sum_{k=m}^{n}e^{-t\lambda_k} \| Q_k \| \| \psi \| \\
 	&= \left(\,\sum_{k=m}^{\infty}e^{-t\lambda_k}\|Q_k\| \right) \|\psi \| 
\end{align*}
so we obtain
\begin{equation}
 	 	\|e^{-tH_B}|_{\Ran(I-P_m)}\| \leq \sum_{k=m}^{\infty}e^{-t\lambda_k}\|Q_k\|.
\end{equation}
In \cite{Henry} it was shown that $\lim_{k\to\infty}\f{\log \|Q_k\|}{k}=\f{\pi}{\sqrt 3}$. Accordingly, for every $\mu>\f{\pi}{\sqrt 3}$ there exists a $C>0$ such that 
\begin{equation}
 	\|Q_k\|\leq Ce^{\mu k}.
\end{equation}
In particular, choosing $\mu=2$, we obtain $\|Q_k\|\leq Ce^{2 k}$ for some $C>0$.

On the other hand, it is well-known from \cite{sibuya1975global} that 
\begin{equation}
 	\lambda_k \geq ck^{6/5}.
\end{equation}
Combining these two facts, we arrive at
\begin{align*}
 	 	\|e^{-tH_B}|_{\Ran(I-P_m)}\| &\leq \sum_{k=m}^{\infty}e^{-tck^{6/5}}Ce^{2k} \\
 	 	&= C\sum_{k=m}^{\infty}e^{-tck^{6/5}+2k} 
\end{align*}
Clearly, there exists a $k_0$ such that $\f 1 2 tck^{6/5}>2k$ for all $k>k_0$ and $k_0$ is independent of $t$ as long as (say) $t\geq 1$. So we can decompose
\begin{align*}
 	\|e^{-tH_B}|_{\Ran(I-P_{m-1})}\| &\leq C\sum_{k=m}^{k_0}e^{-tck^{6/5}+2k} + C\!\sum_{k=k_0+1}^{\infty}e^{-\f{c }{ 2} tk^{6/5}}
\end{align*}
Since $k_0$ is independent of $m$ and $t$, the first term in this estimate is only present as long as $m<k_0$. 

Since we are interested in asymptotics, let us assume $m>k_0\geq 1$ from now on. Our task is thus to estimate the second term in the above inequality. This is easily done by using $\lfloor x+1 \rfloor\geq x$ for all $x>0$ and calculating
\begin{align*}
 	\sum_{k=m}^\infty e^{-\f c 2 t (k+1)^{6/5}} &\leq \int_m^\infty e^{-\f c 2 t x^{6/5}}dx \\
 	&\leq \int_m^\infty \big(\tfrac 6 5 x^{1/5}\big) e^{-\f c 2 t x^{6/5}}dx \\
 	&= \tfrac{ 2}{ ct} \left[ -e^{-\tfrac c 2 t x^{6/5}}\right]_{m}^{\infty} \\
 	&= \tfrac{ 2}{ ct} e^{-\tfrac c 2 t m^{6/5}} 
\end{align*}
This finally shows our main ingredient
\begin{lemma}
	There exist constants $k_0,M,\omega>0$ such that 
	\begin{equation*}
		\|e^{-tH_B}|_{\Ran(I-P_{m-1})}\| \leq Me^{-\omega m^{\f 6 5}t}
	\end{equation*}
	for all $m>k_0,\,t\geq 1$.
\end{lemma}
\noindent This immediately leads to\footnote{
	Since we only know that $\|e^{-tH_B}\|$ is bounded by 1 between $t=0$ and $t=1$, we might need to increase $M$ to obtain \eqref{resest}.} 
\begin{equation}\label{resest}
 	\|(H_B|_{\Ran(I-P_{m-1})}-z)^{-1}\| \leq \f{\tilde M}{\omega m^\f{6}{5}-\re z}
\end{equation}
for all $\re z<\omega m^\f{6}{5}$, where $\tilde M,\omega $ are independent of $m$. On the whole, the resolvent of $H_B$ is estimated by (see the proof of Lemma \ref{BoultonLemma})
\begin{align*}
 	\big\|\!(H_B\!-\!z)\!^{-1}\!\big\| \!
 	&\leq \!\bigg(\!1\!+\!\sum_{k=1}^m\!\|Q_k\|\!\bigg)\!\bigg(\sum_{k=1}^m\!\big\|\!(H_B|_{\Ran Q_k}\!-\!z)\!^{-1}\!\big\|\!+\!\big\|\!(H_B|_{\Ran(I-P_m)}\!-\!z)\!^{-1}\!\big\| \!  \bigg) \\
 	&\leq \!\bigg(1+\sum_{k=1}^{m}\|Q_k\|\bigg)\!\bigg( \sum_{k=1}^m\f{1}{|\lambda_k-z|} + \f{\tilde M}{\omega (m+1)^\f{6}{5}-\re z} \bigg)
\end{align*}
The first summand in the second factor gives the discs around the eigenvalues in \eqref{maininc}, the second gives the half-plane. 
%Since we are interested in the $\eps$-dependence of the \emph{unbounded} component, let us examine the second summand $\left(1+\sum_{k=1}^{m}\|Q_k\|\right) \|(H_B|_{\Ran(I-P_m)}-z)^{-1}\| $ more closely.
If we keep the distance of $\mathrm{Re}(z)$ to $\omega (m+1)^{6/5}$ constant, the second factor on the right-hand side stays bounded as $m\to\infty$. 
Since the first factor grows as $e^{(\text{constant})\cdot m}$, we have 
\begin{equation}\label{sim}
 	\|(H_B-z)^{-1}\| \leq C e^{C'(\re z)^{5/6}}
\end{equation}
uniformly in $z$ as long as $\operatorname{dist}(z,\sigma(H_B))$ is bounded below by a positive constant.

Keeping this in mind, suppose now that $z\in\sigma_\eps(H_B)\cap \{\operatorname{dist}(z,\sigma(H_B))>1\}$. We deduce 
\begin{align*}
 	\log\left(\f 1 \eps\right) \leq \log \|(H_B-z)^{-1}\| &\leq C''(\re z)^{5/6} \\
 	\Leftrightarrow\; \left(\log\f 1 \eps\right)^{6/5} &\leq C''\,\re z
\end{align*}
Together with the complementary estimate in \eqref{pessim}   this proves the scaling in \eqref{mainscaling}.
\end{proof}

Let us compare Theorem \ref{cubicmainth} to the results of \cite{SK2}}
. As noted in the introduction, it was shown there that for every $\delta>0$ there exist constants $C_1,C_2>0$ such that for all $\eps>0$
\begin{equation*}
	\sigma_\eps( H_B )\supset\left\{ z\in\mathbb C :  |z|\geq C_1, \,|\arg z|<\left(\f\pi 2-\delta\right),\,|z|\geq C_2\left(\log\f 1 \eps\right)^{6/5} \right\}.
\end{equation*}
Clearly, we have found the same scaling in \eqref{mainscaling}. Thus, Theorem \ref{cubicmainth} shows that the scaling \eqref{pessim} obtained in \cite{SK} is sharp.

Moreover, we obtain as a byproduct the following two statements about the semigroup and the resolvent of $H_B$.
\begin{cor}\label{cor1}
	The semigroup $e^{-tH_B}$ is immediately differentiable.
\end{cor}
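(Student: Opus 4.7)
The plan is to leverage the two estimates that drove the proof of Theorem \ref{cubicmainth}---Henry's asymptotic $\lim_{k\to\infty}\log\|Q_k\|/k=\pi/\sqrt{3}$ (yielding $\|Q_k\|\le Ce^{2k}$) and Sibuya's lower bound $\lambda_k\ge ck^{6/5}$---to show that both the eigenfunction expansion of $e^{-tH_B}$ and the formal series for $H_B e^{-tH_B}$ converge in operator norm for every $t>0$. Concretely, for any fixed $t>0$ the elementary bound $xe^{-tx/2}\le 2/(te)$ combined with $\lambda_k\ge ck^{6/5}$ gives $\lambda_k e^{-t\lambda_k}\|Q_k\|\le C_t\, e^{-tck^{6/5}/2+2k}$, and since $tck^{6/5}/2$ eventually dominates $2k$, both series
\begin{equation*}
  S_0(t):=\sum_{k=0}^\infty e^{-t\lambda_k}Q_k,\qquad S_1(t):=\sum_{k=0}^\infty\lambda_k e^{-t\lambda_k}Q_k
\end{equation*}
converge absolutely in operator norm and define bounded operators on $L^2(\R)$.

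Next I would identify $S_0(t)=e^{-tH_B}$ and $S_1(t)=H_B e^{-tH_B}$. Setting $V_0^n:=\bigoplus_{k=0}^n\Ran(Q_k)$, this is a finite-dimensional $H_B$-invariant subspace on which $H_B$ acts diagonally (since the algebraic eigenspaces are one-dimensional), so for $\psi\in V_0^n$ one has $e^{-tH_B}\psi=\sum_{k=0}^n e^{-t\lambda_k}Q_k\psi=S_0(t)\psi$ and $H_B e^{-tH_B}\psi=\sum_{k=0}^n\lambda_k e^{-t\lambda_k}Q_k\psi=S_1(t)\psi$, using that $Q_j Q_k=\delta_{jk}Q_k$ kills the tails of both series. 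By completeness of the eigenfunction system $\{\psi_k\}$, the union $\bigcup_n V_0^n$ is dense in $L^2(\R)$, so $S_0(t)$ and $e^{-tH_B}$ are bounded operators agreeing on a dense set and hence are equal.

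To push the analogous identification of $S_1(t)$ from $V_0^n$ to the full space, for arbitrary $\psi\in L^2(\R)$ I would pick $\psi_n\in V_0^n$ with $\psi_n\to\psi$. Boundedness of $e^{-tH_B}$ gives $e^{-tH_B}\psi_n\to e^{-tH_B}\psi$, while boundedness of $S_1(t)$ gives $H_B e^{-tH_B}\psi_n=S_1(t)\psi_n\to S_1(t)\psi$. Closedness of $H_B$ then forces $e^{-tH_B}\psi\in\mathcal D(H_B)$ with $H_B e^{-tH_B}\psi=S_1(t)\psi$. Since $\psi$ was arbitrary, $e^{-tH_B}L^2(\R)\subset\mathcal D(H_B)$ for every $t>0$, which is the defining condition for immediate differentiability.

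The main obstacle is the exponential blow-up $\|Q_k\|\sim e^{\pi k/\sqrt{3}}$ of the Riesz-projection norms competing with the decay supplied by the semigroup; immediate differentiability works here precisely because $\lambda_k\ge ck^{6/5}$ grows strictly faster than $\log\|Q_k\|\sim k$, leaving enough room in the exponent to absorb both the growth of $\|Q_k\|$ and the extra polynomial factor $\lambda_k$ appearing in $S_1(t)$. If the gap between these two rates were to close, one would expect to lose the series convergence and with it immediate differentiability.
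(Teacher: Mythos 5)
Your proof is correct but takes a genuinely different route. The paper treats Corollaries \ref{cor1} and \ref{cor2} together: it restricts $e^{-tH_B}$ to the invariant subspaces $\Ran(I-P_m)$, invokes the abstract Engel--Nagel characterization of (eventually) differentiable semigroups (\cite[Cor.\ II.4.15]{engel2000one}) in tandem with the resolvent estimate \eqref{resest}, and reassembles via the decomposition \eqref{directsum}. You instead verify the range condition $\ran(e^{-tH_B})\subset\mathcal D(H_B)$ for all $t>0$ directly, by showing that $S_1(t)=\sum_k\lambda_k e^{-t\lambda_k}Q_k$ converges absolutely in operator norm; this uses exactly the same two hard inputs as the paper's proof of Theorem \ref{cubicmainth} (Henry's bound $\|Q_k\|\le Ce^{2k}$ and Sibuya's bound $\lambda_k\ge ck^{6/5}$), together with completeness of the eigenfunctions and closedness of $H_B$ to propagate the identity $S_1(t)=H_Be^{-tH_B}$ from the dense span $\bigcup_n V_0^n$ to all of $L^2(\R)$. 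Your route is more elementary and more self-contained---it bypasses the logarithmic-region resolvent criterion entirely---at the cost of leaning explicitly on the eigenfunction expansion and the one-dimensionality of the algebraic eigenspaces; but the paper already quotes both facts, so nothing extra is demanded. One small caveat of phrasing: $\ran(e^{-tH_B})\subset\mathcal D(H_B)$ for every $t>0$ is an \emph{equivalent characterization} rather than the textbook definition of immediate differentiability, but the equivalence is standard (the orbit $t\mapsto e^{-tH_B}x$ is right-differentiable at $t>0$ precisely when $e^{-tH_B}x\in\mathcal D(H_B)$, and a continuous right-derivative upgrades this to $C^1$), so the conclusion is sound.
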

\begin{cor}\label{cor2}
	The resolvent norm of $H_B$ satisfies
	\begin{equation}
		\lim_{r\to\infty} \|(H_B-s-ir)^{-1}\|=0
	\end{equation}
	for all $s\in\R$.
\end{cor}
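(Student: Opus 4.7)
The plan is to refine the argument from the proof of Theorem \ref{cubicmainth}. Using Lemma \ref{BoultonLemma} together with the fact that the algebraic eigenspaces of $H_B$ are one-dimensional \cite{SK2,DucTai2006351}, one has
\begin{equation*}
\|(H_B-z)^{-1}\|\leq C_m\Bigl(\sum_{k=0}^m|\lambda_k-z|^{-1}+\|(H_B|_{\Ran(I-P_m)}-z)^{-1}\|\Bigr),
\end{equation*}
with a prefactor $C_m=1+\sum_{k=0}^m\|Q_k\|$ that depends only on $m$. For $z=s+ir$ the finitely many terms $|\lambda_k-s-ir|^{-1}$ tend to $0$ as $|r|\to\infty$, so the problem reduces to showing norm-vanishing of the tail resolvent on every vertical line.

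For the tail I would fix $m$ so large that $s<\omega(m+1)^{6/5}$ (with $\omega$ as in Theorem \ref{cubicmainth}'s proof) and exploit the Laplace-transform representation
\begin{equation*}
(H_B|_{\Ran(I-P_m)}-s-ir)^{-1}=-\int_0^\infty e^{(s+ir)t}\,e^{-tH_B}\big|_{\Ran(I-P_m)}\,dt,
\end{equation*}
together with the vector-valued Riemann--Lebesgue lemma. To apply the latter in operator norm, the integrand must be Bochner integrable as a $\mathcal B(\Ran(I-P_m))$-valued function. The $L^1$-bound is immediate from the exponential tail decay $\|e^{-tH_B}|_{\Ran(I-P_{m-1})}\|\leq Me^{-\omega m^{6/5}t}$ (for $t\ge 1$) already established in the proof of Theorem \ref{cubicmainth}, while strong measurability follows because the integrand is norm-continuous on $(0,\infty)$, thanks to the immediate norm-continuity of $e^{-tH_B}$ implied by Corollary \ref{cor1}. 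With both properties in place, Riemann--Lebesgue yields $\|\int_0^\infty e^{irt}\,e^{st}e^{-tH_B}|_{\Ran(I-P_m)}\,dt\|\to 0$ as $|r|\to\infty$, and combining with the finite-rank part completes the proof.

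The main technical obstacle is precisely this Bochner-integrability/norm-measurability step: since $\mathcal B(\LR)$ is not separable, strong continuity of the semigroup alone is not enough to promote the strong-operator decay of the Fourier transform (which would hold for any $C_0$-semigroup via vector-valued Riemann--Lebesgue applied to each orbit) to norm decay. This is why Corollary \ref{cor1} -- giving immediate norm-continuity as a by-product of immediate differentiability -- is the decisive extra ingredient beyond what was needed for Theorem \ref{cubicmainth} itself; everything else is bookkeeping with the Riesz-projection decomposition already set up in the previous section.
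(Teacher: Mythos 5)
Your argument is correct and matches the paper's: both reduce to the tail resolvent via Lemma \ref{BoultonLemma}, establish immediate norm-continuity of $e^{-tH_B}|_{\Ran(I-P_m)}$ from the resolvent estimate and immediate differentiability, and then conclude decay along vertical lines. The only difference is that the paper cites \cite[Cor.~II.4.19]{engel2000one} for that last step, whereas you unpack its proof (Laplace-transform representation plus vector-valued Riemann--Lebesgue), correctly identifying norm-continuity as the ingredient that upgrades strong-operator decay to operator-norm decay.
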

\begin{proof}
By \cite[Cor. II.4.15]{engel2000one} and the estimate \eqref{resest} the semigroups $e^{-tH_B}|_{\Ran(I-P_{m})}$ are immediately differentiable for every $m$ and hence immediately norm-continuous. By \cite[Cor. II.4.19]{engel2000one} one has
\begin{equation*}
 	\lim_{r\to\infty}\big\|\big(H_B|_{\Ran(I-P_{m-1})}-(s+ir)\big)^{-1}\big\| \to 0 \quad \forall s<\omega m^{\f 6 5}.
\end{equation*}
Together with the estimate \eqref{normineq} the assertion follows.
\end{proof}

Notice that the strategy of the proof of Theorem \ref{cubicmainth} also applies to more general operators for which the norms of the spectral projections are known such as those considered in \cite{Henry2,MSV}

\subsection{Counterexample: An Operator with Negative Real Part}

Let $c > 0$ and consider the operator
\begin{equation}
	H_c = -\f{d^2}{dx^2} + ix^3 - cx^2 
\end{equation}

on $L^2(\R)$, where $H_c$ is defined in the sense of Proposition \ref{BSTprop}. This operator is still well-behaved in the sense that it is closed and its resolvent is compact. Moreover, its spectrum is still well-behaved in the sense that it is closed and its resolvent is compact. Moreover, its spectrum is still real and positive. However, as we will show, its pseudospectrum is not well-behaved at all. In fact, $H_c$ does not even generate a one-parameter semigroup in this case.

\begin{theorem}\label{negreal}
	For $H_c$ no inclusion of the type \eqref{maininc} is possible. More precisely, for every $C,R,M>0$ there exists $z\in\mathbb C$ such that $\re z<-R,\,|z|>M$ and
	\begin{equation}
		\|(H_c-z)^{-1}\| \geq C.
	\end{equation}
	In particular, $H_c$ does not generate a one-parameter semigroup.
\end{theorem}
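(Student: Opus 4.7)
The plan is to construct, for every sufficiently large parameter $s>0$, a nonzero trial function $\psi_s\in\mathcal D(H_c)$ and a spectral parameter $z_s\in\mathbb C$ satisfying
\begin{equation*}
\re z_s\to-\infty,\quad |z_s|\to\infty,\quad \frac{\|(H_c-z_s)\psi_s\|}{\|\psi_s\|}\to 0 \quad (s\to\infty).
\end{equation*}
Since $\sigma(H_c)\subset(0,\infty)$ is a discrete set, the $z_s$ eventually lie in $\rho(H_c)$, and the elementary lower bound $\|(H_c-z_s)^{-1}\|\geq\|\psi_s\|/\|(H_c-z_s)\psi_s\|$ then produces resolvent norms at points of arbitrarily negative real part and arbitrarily large modulus that exceed any prescribed $C$, yielding the first assertion of the theorem. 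The second assertion, non-generation of a $C_0$-semigroup by $-H_c$, follows at once from Hille--Yosida: generation of such a semigroup with type $\omega$ would force $\|(H_c-z)^{-1}\|\leq M/(-\re z-\omega)$ uniformly on $\{\re z<-\omega\}$, in particular forcing the resolvent norm to go to $0$ as $\re z\to-\infty$---which directly contradicts what we have just established.

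The natural choice for the spectral parameter is $z_s:=V(s)=is^3-cs^2$, a point on the curve swept out by the values of $V$ itself; it already satisfies $\re z_s=-cs^2\to-\infty$ and $|z_s|\sim s^3\to\infty$. For $\psi_s$ I would use a semiclassical WKB/coherent-state wavepacket localised near $x=s$,
\begin{equation*}
\psi_s(x) = \chi\!\left(\frac{x-s}{\ell_s}\right)\exp\bigl(i\Phi_s(x)\bigr),
\end{equation*}
where $\chi\in C_c^\infty(\R)$ is a fixed bump, $\ell_s$ is an $s$-dependent localisation scale, and $\Phi_s$ is a polynomial phase whose coefficients are determined by matching the symbol of $-\partial_x^2+V-z_s$ order-by-order in $(x-s)$ up to cubic order (possible because $V$ itself is a cubic polynomial, so its Taylor expansion terminates). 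The square-root branches are chosen so that the quadratic part of $\Phi_s$ produces Gaussian decay, and a cubic correction is included so that the matching equations close with enough free parameters. The residual $(H_c-z_s)\psi_s$ is then generated only by the derivatives of the cut-off $\chi$ (supported at $|x-s|\sim\ell_s$, where $\psi_s$ is already small) together with a quartic cross-term arising from the square of the cubic phase; both can be shown to be $o(\|\psi_s\|)$ by standard Gaussian-moment estimates, for an appropriate choice of $\ell_s$.

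The principal technical obstacle is a delicate balance of $s$-dependent scales. A naive Gaussian ansatz centred at $s$ (without the cubic correction) forces the linear momentum $\Phi_s'(s)$ to have imaginary part of order $s^{3/2}$, which translates the effective centre of $|\psi_s|^2$ away from $s$ by a macroscopic distance of order $s$ and thereby invalidates the local Taylor expansion of $V$ on the support of $\psi_s$. Introducing the cubic phase correction, together with a cut-off $\chi$ placed on an intermediate scale $\ell_s$ between the Gaussian width $\sim s^{-1/4}$ and the macroscopic scale $\sim s$, restores the balance; verifying the corresponding Gaussian-moment bounds rigorously is the main computational step. Should a polynomial-phase ansatz prove intractable, a fallback is an Airy-type local model near the turning point $V(x)=z_s$ at $x=s$: the connection formulae for Airy functions directly furnish an $L^2$ quasi-mode whose residual is manifestly small, and the rest of the argument proceeds identically.
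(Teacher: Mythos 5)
Your overall strategy---constructing explicit quasi-modes to produce lower bounds on the resolvent norm, and then invoking Hille--Yosida to rule out semigroup generation---is the right one and is essentially what powers the cited result the paper relies on. The Hille--Yosida step is correct: a bound of the form $\|(H_c-z)^{-1}\|\le M/(-\re z-\omega)$ on a left half-plane would force the resolvent norm to zero as $\re z\to-\infty$, which your lower bounds preclude. The paper takes a shortcut here: it rescales $H_c$ unitarily to the semiclassical operator $H_c^h=-h^2\partial_x^2+ix^3-ch^{2/5}x^2$, cites Theorem~3.1 of~\cite{NV} (a Davies--Zworski-type pseudomode result) to obtain $\|(H_c^h-i)^{-1}\|\ge C^{1/h}$, pulls this back to get exponential blow-up at $z=i\tau^3$, and then shifts $H_c\mapsto H_c+\alpha$ to push these points to $\re z=-\alpha$. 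So the paper never performs the WKB construction itself.

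The genuine gap in your proposal lies in the specific choice $z_s=V(s)$, which corresponds to momentum $\xi=0$ in phase space. For the symbol $p(x,\xi)=\xi^2+ix^3-cx^2$ the Poisson bracket is $\{\re p,\im p\}=6\xi x^2$, which \emph{vanishes identically on} $\{\xi=0\}$. Thus your chosen spectral parameters lie precisely on the degenerate boundary of the semiclassical pseudospectrum, where the standard Gaussian (Davies/Dencker--Sj\"ostrand--Zworski) quasi-mode construction fails --- and this is not a minor complication that a cubic phase correction resolves. Concretely, matching orders as you describe with $z_s=V(s)$ forces the leading phase coefficients $a_1=a_2=0$ and then $a_3=3s^2+2ics$, giving $|\psi_s|\sim e^{-\frac{cs}{3}(x-s)^3}$, which \emph{grows} exponentially for $x<s$; truncating with $\chi$ then produces a boundary residual that is not small compared with $\|\psi_s\|$, and the whole scheme collapses. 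This is exactly the turning-point phenomenon you anticipated, and while an Airy-type local model is the right tool there, it is a substantially more delicate analysis than the Gaussian-moment estimates you sketch, and you have not carried it out.

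There is a simple fix that would rescue your self-contained approach: move off the boundary. Take $z_s=\xi_s^2+V(s)$ with a real $\xi_s\neq 0$ chosen so that $\xi_s=o(s)$ (e.g.\ $\xi_s=1$); then $\re z_s=\xi_s^2-cs^2\to-\infty$, $|z_s|\sim s^3\to\infty$, and the bracket $\{\re p,\im p\}(s,\xi_s)=6\xi_s s^2\neq 0$, so the ordinary Gaussian-packet pseudomode centred at $(s,\xi_s)$ with quadratic phase works without any cubic correction or Airy model. With that modification your route becomes a clean, direct alternative to the paper's ``rescale and cite \cite{NV}, then shift'' argument. As written, however, the key construction is not verified and the natural-looking choice $\xi=0$ is precisely the one for which the proposed method does not close.
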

\begin{wrapfigure}{R}{0.37\textwidth} 
\vspace{-5mm}
\centering
\begin{tikzpicture}[domain=0:1.587,scale=0.6]
\definecolor{niceblue}{rgb}{0.2,0.15,0.7}
\fill[fill=niceblue,opacity=.2] plot ({-0.7*\x^2},{\x^3}) -- (0,4) ;
\fill[fill=niceblue,opacity=.2] plot ({-0.7*\x^2},{-\x^3}) -- (0,-4) ;
\fill[fill=niceblue,opacity=.2] (0,-4) rectangle (4,4);
\draw[-] (1,-.1) -- (1,.1) ;
\draw (1,-0.4) node{\tiny{$1$}};
\draw[-] (2,-.1) -- (2,.1) ;
\draw (2,-0.4) node{\tiny{$2$}};
\draw[-] (3,-.1) -- (3,.1) ;
\draw (3,-0.4) node{\tiny{$3$}};
\draw[-] (-1,-.1) -- (-1,.1) ;
\draw (-1,-0.4) node{\tiny{$-1$}};
\draw[-] (-2,-.1) -- (-2,.1) ;
\draw (-2,-0.4) node{\tiny{$-2$}};
\draw[-] (-3,-.1) -- (-3,.1) ;
\draw (-3,-0.4) node{\tiny{$-3$}};
\draw[-] (-.1,-1) -- (.1,-1) ;
\draw (-0.5,-1) node{\tiny{$-1$}};
\draw[-] (-.1,-2) -- (.1,-2) ;
\draw (-0.5,-2) node{\tiny{$-2$}};
\draw[-] (-.1,-3) -- (.1,-3) ;
\draw (-0.5,-3) node{\tiny{$-3$}};
\draw[-] (-.1,1) -- (.1,1) ;
\draw (-0.4,1) node{\tiny{$1$}};
\draw[-] (-.1,2) -- (.1,2) ;
\draw (-0.4,2) node{\tiny{$2$}};
\draw[-] (-.1,3) -- (.1,3) ;
\draw (-0.4,3) node{\tiny{$3$}};
\draw[->] (-3.6,0) -- (4,0) node[below]{\small{Re}};
\draw[->] (0,-4) -- (0,4) node[left]{\small{Im}};
\end{tikzpicture}
\caption{The semiclassical pseudospectrum of $H_c^h$. The boundary curve approaches the imaginary axis as $h\to 0$.}
\label{semiclpssp}
\vspace{-1cm}
\end{wrapfigure}
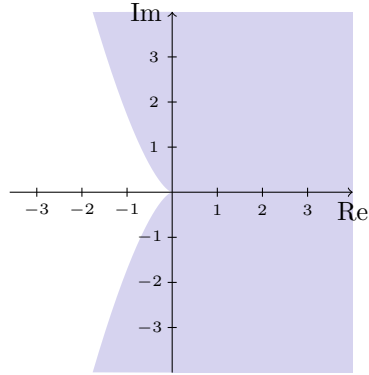
\begin{proof}
We will use Theorem 3.1 and Lemma 4.1 of \cite{NV}. Similarly to their strategy, let us define the unitary transformation
\begin{equation*}
 	(U\psi)(x):=\tau^{1/2}\psi(\tau x),
\end{equation*}
where $\tau>0$. This transformation takes $H_c$ to its semiclassical analogue 
\begin{equation*}
 	H_c^h:=\tau^{-3}UH_cU=-h^2\f{d^2}{dx^2}+ix^3-ch^{2/5}x^2,
\end{equation*}
where $h=\tau^{-5/2}$.

The \emph{semiclassical pseudospectrum} (cf. (3.2) in \cite{NV}) for this operator is the set (cf. Figure \ref{semiclpssp})
\begin{equation*}
 	\Lambda_h=\{\xi^2+ix^3-ch^{2/5}x^2 : \xi,x\neq 0\}.
\end{equation*}
We obviously have $i\in\Lambda_h$ for every $h>0$ (remember that $c<0$). By \cite[Theorem 3.1]{NV} and the unitarity of $U$ there exists a $C>0$ such that 
\begin{align*}
	\|(H_c-i\tau^3)^{-1}\| &= \tau^{-3}\|(H_c^h-i)^{-1}\| \\
	&\geq h^{6/5}C^{1/h}
\end{align*}
Sending $\tau=h^{-2/5}\to\infty$, we see that the resolvent norm of $H_c$ diverges exponentially on the imaginary axis.

To show divergence on vertical lines with strictly negative real part we may shift $H_c$ by a real constant and then apply the above procedure. More precisely, let $\alpha>0$ and consider the operator $H_c+\alpha$. Its semiclassical analogue is
\begin{equation*}
 	\tau^{-3}U(H_c+\alpha)U=H_c^h+h^{6/5}\alpha
\end{equation*}
and its semiclassical pseudospectrum 
\begin{equation*}
 	\Lambda_h=\{\xi^2+ix^3-ch^{2/5}x^2+h^{6/5}\alpha : \xi,x\neq 0\}
\end{equation*}
is shifted to the right by $h^{6/5}\alpha$. Its boundary curve intersects the imaginary axis when $-ch^{2/5}x^2+h^{6/5}\alpha=0$ the solution of which is $h^{2/5}\left(\f{\alpha}{c}\right)^{1/2}$. Since this tends to 0 as $h\to 0$ one can always find $h_0>0$ such that $i\in\Lambda_h$ for all $h<h_0$. This enables us to apply the above procedure for the shifted operator and obtain again exponential divergence on the imaginary axis.
\end{proof}
\paragraph{Remark:}
Given the above lower estimate of $\|(H_c-z)^{-1}\|$, let us mention that it is still possible to obtain weaker upper bounds on the resolvent norm of $H_c$. Boegli, Siegl and Tretter have shown in \cite{BST2015} that for a very general class of Schroedinger operators, including  $H,H_c$ and $H_B$, the resolvent norm always decays in a sector in the complex plane which opens to the \emph{left}. 

In other words, operators such as $H_c$ are still \emph{sectorial} in the sense of \cite{haase2006functional}. In particular, there exists an analytic functional calculus for these operators which, in turn, yields the existence e.g. of fractional powers of $H_c$.

\section{Conclusion and Outlook}\label{conclusion}

We have shown that for $\re V\ge c|x|^2$ the unbounded component of the pseudospectrum of $H=-\Delta+V$ moves towards $+\infty$ as $\eps\to 0$. We note that this result holds for arbitrary imaginary part of the potential.

For a similar operator with $\re V=0$ we were able to give a precise scaling for how fast this happens. To obtain this scaling the knowledge of the norms of the Riesz projections was crucial. 

Let us remark that an analogous result to Theorem \ref{mainth} trivially holds for operators which are m-sectorial (in the sense of \cite{Kato}). This is due to the fact that the resolvent norm decays outside the numerical range. This includes e.g. the Bender oscillator $-\f{d^2}{dx^2}-(ix)^\nu,\;2<\nu<4$ (cf. \cite{Mezincescu2001} for a precise definition). The conclusion of Theorem \ref{mainth} holds for $H$ if $2<\eps\leq 3$. Furthermore, by semiclassical methods, the conclusion of Theorem \ref{negreal} holds if $3<\eps<4$.

More generally, Schr\"odinger Operators with a potential whose range is contained in a sector belong to the above category (cf. \cite[Prop. 2.2]{BST2015} for a precise study).

A number of open questions remain.
\begin{itemize}
\item[-]
To the authors' knowledge the norms of the Riesz projections of the harmonic oscillator with imaginary cubic potential have not been computed yet, but we strongly suspect that the scaling $\|Q_k\|\sim e^{\omega k}$ (which holds for the Bender oscillator) is also true in this case. 
\item[-]
Furthermore, we have seen that the resolvent norm of the Bender oscillator $H_B$ goes to zero on vertical lines in the complex plane. However, we do not know the rate of the decay. Clearly, there exists no $C>0$ such that 
$$\|(H_c-s-ir)^{-1}\|\leq \f{C}{|r|},\quad\forall s\in\R$$
because this would imply that $H_B$ generates an \emph{analytic} semigroup (which is false by \eqref{pessim}).
The question remains exactly how slow the decay is. The answer could be used to confirm the results of \cite{Willi} who computed the asymptotic shape of the level sets of the resolvent norm.
\item[-]
Finally, there is the obvious question as to whether the central assumption $\re V\ge c|x|^2$ can be relaxed. It is not obvious how to generalize our method of proof to potentials which do not satisfy this lower bound. Indeed, our compactness proof of the semigroup heavily relied on the fundamental solution of the harmonic oscillator. However, the examples of the imaginary cubic oscillator and the imaginary airy operator suggest that the lower bound on $\re V$ is not essential. It seems likely to the authors that under suitable conditions on  $\imag V$ the semigroup of $-\Delta+V$ will be compact even for $\re V=0$.
\end{itemize}

\section*{Acknowledgements}
PWD and FR gratefully acknowledge inspiring discussions with P.~Siegl
(Bern), D.~Krej\v{c}i\v{r}\'{i}k (Prague), R.~Nov\'{a}k (Prague), and
L.~Boulton (Edinburgh) as well as support from the European Commission
via the Marie Sk{\l}odowska-Curie Career Integration Grant \emph{STP}.
PED was supported in part by an STFC Consolidated Grant, ST/L000407/1,
and also by the GATIS Marie Curie FP7 network (gatis.desy.eu) under
REA Grant Agreement No 317089.

\section*{References}
\begingroup
\renewcommand{\section}[2]{}
\nocite{*}
\bibliography{Paper-Allg.bib}
\bibliographystyle{alphaabbr}
\endgroup

\end{document}